\newcommand{\sinn}{\ensuremath{S_{\textnormal{in}}}}
\newcommand{\sout}{\ensuremath{S_{\textnormal{out}}}}
\newcommand{\nn}[2]{\ensuremath{\textsc{n}_{#1}(#2)}}
\newcommand{\direct}[2]{{(#1,#2)}}
\newbox\ProofSym
\newtheorem{theorem}{Theorem} 
\newtheorem{lemma}[theorem]{Lemma}
\begin{document}
\title{Reachability Problems for Transmission Graphs}
%
%

\author{Shinwoo An\thanks{Pohang University of Science and Technology,
		Korea. Email: {\tt{\{shinwooan, eunjin.oh\}@postech.ac.kr}}} 
	\and Eunjin Oh\footnotemark[1]}

\maketitle              
\begin{abstract}
	Let $P$ be a set of $n$ points in the plane where each point $p$ of $P$ is associated
	with a radius $r_p>0$. 
	The transmission graph $G=(P,E)$ of $P$ is defined as the directed graph 
	such that $E$ contains an edge from $p$ to $q$ if and only if $|pq|\leq r_p$
	for any two points $p$ and $q$ in $P$, where $|pq|$ denotes
	the Euclidean distance between $p$ and $q$. 	
	In this paper, we present a data structure of size $O(n^{5/3})$ such that
	for any two points in $P$, we can check in $O(n^{2/3})$ time 
	if there is a path in $G$ between the two points. 
	This is the first data structure for answering reachability queries 
	whose performance depends only on $n$ but not on the number of edges. 
\end{abstract}

\section{Introduction}
Consider a set $S$ of unit disks in the plane. The \emph{intersection graph}  
for $S$ is defined as the undirected graph whose
vertices correspond to the disks in $S$ such that 
two vertices are connected by an edge if and only if the two disks 
corresponding to them intersect. 
It can be used as a model for broadcast networks: The disks of $S$ represent transmitter-receiver  stations with the same
transmission power.  
One can view the broadcast range of a transmitter as a unit disk.

One straightforward way to deal with the intersection graph for $S$ is to construct 
the intersection graph explicitly, and then run algorithms designed for general graphs. 
However, the intersection graph for $S$ 
has complexity $\Theta(n^2)$ in the worst case even though it can be (implicitly) represented 
as $n$ disks. 
Therefore, it is natural to 
seek faster algorithms for an intersection graph implicitly represented as its underlying set of disks. 
For instance, the shortest path between two vertices in a unit-disk intersection graph can be computed in near linear time~\cite{wang2020near}. 
For more examples, refer to~\cite{cabello2015shortest,chan2019approximate,kaplan2018routing}.

\medskip
A \emph{transmission graph} is a \emph{directed} intersection graph, which is introduced to model broadcast networks in the case that
transmitter-receiver stations have different transmission power~\cite{peleg2010localized,von2009algorithmic}. 
Let $P$ be a set of $n$ points in the plane where each point $p$ of $P$ is associated with a 
radius $r_p>0$. 
The \emph{transmission graph} $G=(V,E)$ of $P$ is an weighted directed graph 
whose vertex set corresponds to $P$. There is an edge $\direct{p}{q}$ in $E$ for two points $p$ and $q$ in $P$ 
if and only if the Euclidean distance $|pq|$ between $p$ and $q$ is at most $r_p$.  The weight of an edge $\direct{p}{q}$ is defined as $|pq|$. 
It is sometimes convenient to consider a point $p$ of $P$ as the disk of radius $r_p$ centered at $p$.
We call it the \emph{associated disk} of $p$, and denote it by $D_p$. 
We say $p$ is \emph{reachable} to $q$ if there is a $p$-$q$ path in $G$. 

\medskip
In this paper, we consider the \emph{reachability} problem for transmission graphs:
Given a set $P$ of points associated with radii, check if a point of $P$ is reachable to another point of $P$ in the transmission
graph. 
In the context of broadcast networks, this problem asks if a transmission station can transmit information to a receiver. 
We consider three versions of the reachability problem: the single-source reachability problem,
(discrete) reachability oracles, and continuous reachability oracles. 
The \emph{single-source reachability problem} asks to compute all vertices reachable from a given source node $p\in P$ in
the transmission graph of $P$. Indeed, we consider the more general problem that asks to compute a $t$-spanner of size $O(n)$. 
Once we have a $t$-spanner of size $O(n)$, we can compute all vertices reachable from a given source node in linear time. 
A \emph{(discrete) reachability oracle} is a data structure for $P$ 
so that, 
given any two query points $p$ and $q$ in $P$, we can check if $p$ is reachable to $q$ in $G$
efficiently. 
A \emph{continuous reachability oracle} is a data structure for $P$ 
for answering reachability queries that takes two points in the plane, one in $P$ and one not necessarily in $P$,
as a query. 

\subsection{Previous Work.} 
The reachability problems and shortest-path problems have been extensively studied not only for general graphs but also
for special classes of graphs; directed planar graphs~\cite{holm2015planar}, Euclidean spanners~\cite{gudmundsson2008approximate,oh:LIPIcs:2020:13396}, and disk-intersection graphs~\cite{cabello2015shortest,chan2019approximate}.
In the following, we introduce several results for transmission graphs of disks in the plane. 
Let $\Psi$ be the ratio between the largest and the smallest radii associated with the points in  $P$. 
\begin{itemize}
	\item \textbf{$t$-Spanners (Single-source reachability problem).} 
	One can solve the single-source reachability problem for transmission graphs
	in $O(n\log^4 n)$ time by constructing a dynamic data structures
	for weighted nearest neighbor queries~\cite{chan:LIPIcs:2019:10428,kaplan_et_al:ACM-SIAM2017}.  
	Kaplan et al.~\cite{kaplan_et_al:SIAMJC2018} presented two algorithms for the more general
	problem that asks to compute a $t$-spanner of size $O(n)$ for any constant $t>1$,
	one with $O(n\log^4 n)$ time and one with $O(n\log n+ n\log \Psi)$ time. 
	Recently, Ashur and Carmi~\cite{ashurt} also considered this problem, and 
	presented an $O(n^2\log n)$-time algorithm for computing a $t$-spanner of which 
	every node  has a constant in-degree, and the total weight is bounded by a function of $n$ and $\Psi$. 
	Also, spanners for transmission graphs in an arbitrary metric space also have been considered~\cite{peleg2010localized,peleg2013relaxed}.
	\medskip 
	\item \textbf{Discrete reachability oracles.}
	Kaplan et al.~\cite{kaplan_et_al:LIPIcs:2015:5106} presented three reachability oracles: 
	one for $\Psi < \sqrt{3}$, two for an arbitrary $\Psi>1$. 
	For an arbitrary $\Psi$, their first reachability oracle
	has performance which polynomially depends on $\Psi$, and the second one
	has performance which polylogarmically depends on $\Psi$. 
	More specifically, the first data structure uses space $O(\Psi^3 n^{1/2})$, and has query time $O(\Psi^5 n^{3/2})$.
	The second one uses space $\tilde{O}_{n,\Psi}(n^{5/3})$, and has query time $\tilde{O}_{n,\Psi}(n^{2/3})$, where $\tilde{O}_{n,\Psi}$ hides
	polylogarithmic factors in $\Psi$ and $n$. This data structure is randomized in the sense that it allows to answer all queries correctly with high probability. 
	\medskip
	
	\item 
	\textbf{Continuous reachability oracles.}
	Kaplan et al.~\cite{kaplan_et_al:SIAMJC2018} shows that a discrete reachability oracle for the transmission graph $G$ of $P$ 
	can be extended to a continuous reachability oracle.
	More specifically, given a discrete reachability oracle for $G$ with space $S(n)$  and  query time $Q(n)$, 
	one can obtain in $O(n\log n\log \Psi)$ time a continuous reachability oracle for $G$ with  space 
	$S(n)+O(n\log\Psi)$ and query time $O(Q(n)+\log n\log\Psi)$.
\end{itemize}

\subsection{Our Results.} As mentioned above, we improve the previously best-known results of the three versions of the reachability problem for transmission graphs.
\begin{itemize}
	\item \textbf{$t$-Spanners (Single-source reachability problem).} 
	We first present an $O(n\log^3 n)$-time algorithm for computing a $t$-spanner for a constant $t>0$ in Section~\ref{sec:bfs}, 
	which improves the running time of the algorithm by \cite{kaplan_et_al:SIAMJC2018} by a factor of $O(\log n)$. 
	Our construction is based on the $\Theta$-graph and grid-like range tree introduced by~\cite{Agarwal:CCCG2013}. 
	This algorithm is also used for computing reachability oracles in Sections~\ref{sec:oracle} and~\ref{sec:geom}, and Section~\ref{sec:psioracle}. 
	\medskip 
	\item \textbf{Discrete reachability oracles.}
	We present two discrete reachability oracles for the transmission graph of $P$.
	The first one described in Section~\ref{sec:oracle} uses space $O(n^{5/3})$ and has query time $O(n^{2/3})$, and
	can be computed in $O(n^{5/3})$ time. 
	This is the first reachability oracle for a transmission graph whose performance is independent of
	$\Psi$.
	
	The second one is described in Section~\ref{sec:psioracle}. Its performance parameters depend  
	polylogarithmically on the radius ratio $\Psi$. More specifically, it
	uses space $\tilde{O}_\Psi(n^{5/2})$, and has query time $\tilde{O}_\Psi(n^{3/2})$.
	It can be constructed in  $\tilde{O}_\Psi(n^{5/2})$, where $\tilde{O}_\Psi(\cdot)$ hides polylogarithmic factors in $\Psi$. 
	To obtain this, we combine two reachability oracles given by \cite{kaplan_et_al:LIPIcs:2015:5106} whose performance parameters using a balanced separator  of smaller size introduced by~\cite{FoxPach}. 
	\medskip
	\item 
	\textbf{Continuous reachability oracles.}
	We also present a \emph{continuous reachability oracle} with space $O(n^{5/3})$, query time $O(n^{2/3})$, and
	preprocessing time $O(n^{5/3}\log^2 n)$ in Section~\ref{sec:geom}, 
	which is the first continuous reachability oracle whose performance is independent of $\Psi$.
	Instead of using the approach in~\cite{kaplan_et_al:SIAMJC2018}, we 
	use auxiliary data structures whose performance is independent of $\Psi$ 
	together with the reachability oracle described in Section~\ref{sec:oracle}. 
\end{itemize}

\section{Improved Algorithm for Computing a \texorpdfstring{$t$}{t}-Spanner}\label{sec:bfs}
Let $P$ be a set of $n$ points associated with radii, and $G=(P,E)$ be the transmission graph
of $P$. 
A subgraph $H$ of $G$ is called a \emph{t-spanner} of $G$ if
 for every pair of vertices of $G$, 
the distance in $H$ between them is at most $t$ times the distance in $G$ between them. 
A \emph{sparse} $t$-spanner is 
useful for constructing a reachability oracle efficiently;
a $t$-spanner preserves the reachability information of $G$, and
it allows us to investigate a small number of edges.  
Therefore, we first consider the problem of constructing a $t$-spanner of $G$ in this section,
and we use it for constructing a reachability oracle in Section~\ref{sec:oracle}.  

In this section, we present an $O(n\log^3 n)$-time algorithm for computing a $t$-spanner of $G$ of size $O(n)$ 
for any constant $t>1$. 
This improves the running time of the algorithm proposed by Kaplan et  al.~\cite{kaplan_et_al:LIPIcs:2015:5106}, which runs in $O(n\log^4 n)$ time.\footnote{ 
	Kaplan et al. mentioned that this algorithm takes an $O(n\log^5 n)$ time. However, this can be improved automatically into $O(n\log^4 n)$ using a data structure of~\cite{chan:LIPIcs:2019:10428}.}
The spanner constructed by Kaplan et al. is a variant of 
the Yao graph. They first show that a variant of the Yao graph is a $t$-spanner
for $G$, and then show how to construct it efficiently.


\subsection{Theta Graphs and \texorpdfstring{$t$}{t}-Spanners of Transmission Graphs}\label{sec:theta}
Our spanner construction is based on the \emph{$\Theta$-graph}, which
 is a geometric spanner similar to the Yao graph. 
Let $k>0$ be a constant, which will be specified later, depending on $t$. 
Imagine that we subdivide the plane into $k$ interior-disjoint cones with opening angle $2\pi/k$ which have the origin
as their apexes. 
Let $\mathcal{F}$ be the set of such cones. See Figure~\ref{theta-4}(a).
For a cone $F\in\mathcal{F}$ and a point $p\in P$, let $F_p$ denote the translated cone of $F$ so that the apex of $F_p$ lies on $p$. 
For each point $p\in P$, we pick $k$ incoming edges for $p$, one for each cone of $\mathcal{F}$, as follows. 
\medskip

For a point $q$ contained in $F_p$, let $q_\ell$  denote the 
the orthogonal projection of $q$ on
the angle-bisector of $F_p$. 
Also, we let $d_F(p, q)$ be the Euclidean distance between $p$ and $q_\ell$,
and let $\nn{F}{p}$ denote the point $q$ in $F_p$ with $\direct{q}{p}\in E$ that minimizes $d_F(p,q)$. 
 See Figure~\ref{theta-4}(b).
Note that $\nn{F}{p}$ might not exist.
%
For each cone $F\in \mathcal{F}$ and each point $p\in P$, 
we choose $\direct{\nn{F}{p}}{p}$. 
See Figure~\ref{theta-4}(c).
Let $H_k$ be the graph consisting of the points in $P$ and the chosen edges. 
If it is clear from the context, we simply use $H$ to denote $H_k$.

\begin{figure}[t]
\centering
   \includegraphics[width=0.8\textwidth]{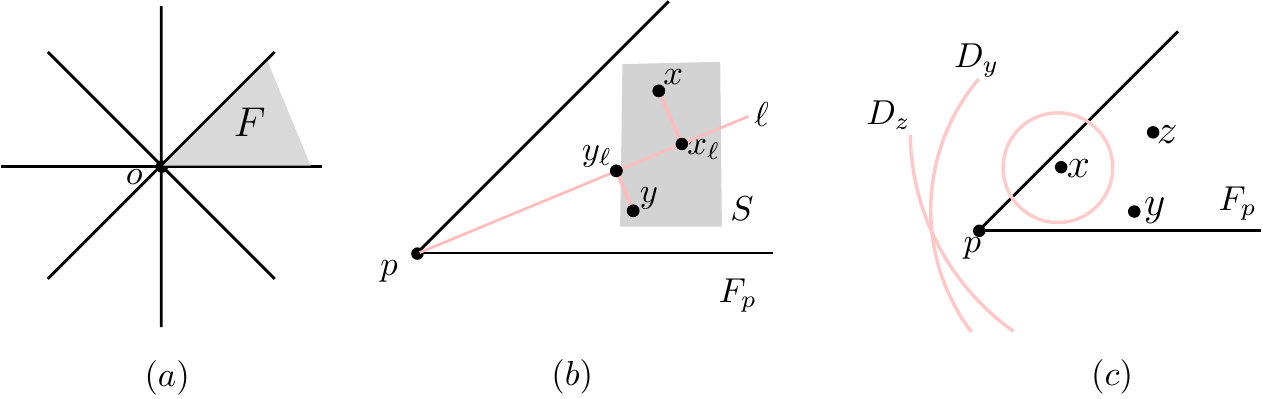}
   \hfil
\caption{Theta graph construction for $k=8$. (a) The $k$ cones of $\mathcal{F}$ subdivides
	the plane. (b) $\nn{F}{p}= y$, and $\nn{S}{p}=y$. (c) The edge $\direct{y}{p}$ is picked.}
\label{theta-4}
\end{figure}

To show that the $H_k$ forms a $t$-spanner, we need the following technical lemma. 

\begin{lemma} \label{theta:0}
	For a point $p$ in $P$ and a cone $F$ in $\mathcal{F}$, consider two points 
	$u$ and $v$ contained in $F_p$ such that $\direct{u}{p}\in E$ and $\direct{v}{p}\in E$. 
	Suppose the opening angle of $F$ is smaller than $\pi/3$, that is, $k>6$. 
	If $d_F(p,v)<d_F(p,u)$, then $\direct{u}{v}\in E$ and $|uv|<|up|$.
\end{lemma}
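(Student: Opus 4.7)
The plan is to first prove $|uv|<|up|$; the edge claim $\direct{u}{v}\in E$ then follows immediately, since $\direct{u}{p}\in E$ gives $|up|\le r_u$, so $|uv|<r_u$ and thus $\direct{u}{v}\in E$. So the real content is the Euclidean distance inequality.

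To compare $|uv|$ with $|up|$, I would apply the law of cosines in the triangle $puv$ at the vertex $p$. Fix coordinates so that the bisector of $F_p$ is the ray from $p$ along the positive $x$-axis, and let $\alpha_u,\alpha_v\in[-\pi/k,\pi/k]$ be the signed angles that $\overrightarrow{pu}$ and $\overrightarrow{pv}$ make with this bisector. Then $\angle upv=|\alpha_u-\alpha_v|$, and by the definition of $d_F$ we have $d_F(p,q)=|pq|\cos\alpha_q$ for $q\in\{u,v\}$. The hypothesis $d_F(p,v)<d_F(p,u)$ therefore rewrites as
\[
|vp|\;<\;\frac{|up|\cos\alpha_u}{\cos\alpha_v}.
\]
On the other hand, the law of cosines
\[
|uv|^2=|up|^2+|vp|^2-2|up||vp|\cos(\alpha_u-\alpha_v)
\]
shows that $|uv|^2<|up|^2$ is equivalent to $|vp|<2|up|\cos(\alpha_u-\alpha_v)$.

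Combining these, it suffices to verify the purely trigonometric inequality
\[
\frac{\cos\alpha_u}{\cos\alpha_v}\;\le\;2\cos(\alpha_u-\alpha_v).
\]
Using the product-to-sum identity $2\cos\alpha_v\cos(\alpha_u-\alpha_v)=\cos\alpha_u+\cos(2\alpha_v-\alpha_u)$, this reduces to $\cos(2\alpha_v-\alpha_u)\ge 0$, which holds whenever $|2\alpha_v-\alpha_u|\le\pi/2$. The triangle inequality gives $|2\alpha_v-\alpha_u|\le 2|\alpha_v|+|\alpha_u|\le 3\pi/k$, and the hypothesis $k>6$ is exactly what forces $3\pi/k<\pi/2$, closing the argument.

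The main obstacle is that $d_F(p,v)<d_F(p,u)$ only controls projections onto the bisector of $F_p$, whereas the desired conclusion is a statement about Euclidean distances; a priori $|vp|$ can exceed $|up|$ when $v$ is nearer to the cone boundary than $u$. The narrow-cone assumption $k>6$ is precisely the angular slack that makes the trade between projected and Euclidean distance work, and the product-to-sum identity is what converts that slack into the clean threshold $3\pi/k<\pi/2$.
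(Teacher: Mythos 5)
Your proof is correct, but it takes a genuinely different route from the paper's. The paper argues synthetically: it cuts an isosceles triangle out of $F_p$ with the line through $u$ orthogonal to the angle-bisector, notes that $d_F(p,v)<d_F(p,u)$ places $v$ inside this triangle, and then invokes the fact that when the apex angle is below $\pi/3$ the apex $p$ is the farthest point of the triangle from the base point $u$, which gives $|uv|<|up|$ at once. You instead compute: the law of cosines at $p$ together with the product-to-sum identity reduces everything to $\cos(2\alpha_v-\alpha_u)\ge 0$, and the angular bound $|2\alpha_v-\alpha_u|\le 3\pi/k<\pi/2$ holds exactly when $k>6$. Both arguments then obtain $\direct{u}{v}\in E$ from $|uv|<|up|\le r_u$ in the same way. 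The paper's version is coordinate-free and shorter on the page, though its ``farthest point'' claim itself conceals a small computation (reduce to the triangle's vertices by convexity, then compare $|up|$ with the distances to the two base corners). Your version makes the role of the hypothesis completely explicit --- the threshold $3\pi/k<\pi/2$ is precisely $k>6$, and the extremal configuration $\alpha_u=-\pi/k$, $\alpha_v=\pi/k$ shows where your estimate is attained --- at the cost of setting up coordinates. One shared loose end, harmless in context: both proofs tacitly assume $v\neq p$ (you divide by $|vp|$; the paper needs $v$ to avoid the apex), which is guaranteed because $\direct{v}{p}$ is an edge of $G$.
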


\begin{proof}
	Consider the triangle bounded by the boundary of $F_p$ and 
	a line through $u$ orthogonal to the angle-bisector of $F_p$.
	Notice that this triangle is an isosceles triangle containing $v$. 
	Since the top angle of the triangle is smaller than $\pi/3$, the apex $p$ is the farthest point from $u$ within the triangle.
	This implies $|uv|<|up|$. Since $r_u$ is at least $|up|$, the edge $\direct{u}{v}$ is contained in $E$.
\end{proof}

\begin{lemma}
\label{theta:1}
For an integer $k>8$, $H_k$ is a $\tan(\frac{\pi}{4}+\frac{2\pi}{k})$-spanner of $G$.
\end{lemma}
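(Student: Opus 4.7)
The plan is to prove, by induction on $|pq|$ over the directed edges $(p,q)\in E$, that $H_k$ contains a path from $p$ to $q$ of total weight at most $t\,|pq|$, where $t = \tan(\pi/4+2\pi/k)$. Because $E$ is finite, this induction is well-founded provided each step recurses on a strictly shorter edge.

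For the step, given $(p,q)\in E$, I would identify the cone $F\in\mathcal{F}$ with $p\in F_q$. Since $(p,q)\in E$ certifies that $p$ itself is a valid candidate for the incoming ``nearest neighbor'' used to define $\nn{F}{q}$, the point $p':=\nn{F}{q}$ exists and satisfies $d_F(q,p')\leq d_F(q,p)$. If $p'=p$, the edge $(p,q)=(p',q)$ already belongs to $H_k$. Otherwise, noting that $k>8>6$ gives opening angle $2\pi/k<\pi/3$, I invoke Lemma~\ref{theta:0} with the roles $(p,u,v)$ played by $(q,p,p')$ (its isoceles-triangle proof carries over verbatim when the projected distances are equal, as long as the two points differ) to conclude $(p,p')\in E$ and $|pp'|<|pq|$. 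Induction then yields a $p$-to-$p'$ path in $H_k$ of length at most $t\,|pp'|$, which combined with the edge $(p',q)\in H_k$ produces a $p$-to-$q$ path of length at most $t\,|pp'|+|p'q|$.

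What remains is the geometric inequality $t\,|pp'|+|p'q|\leq t\,|pq|$, equivalently $|pp'|+|p'q|/t\leq |pq|$. I would place $q$ at the origin with the angle-bisector of $F_q$ along the positive $x$-axis, writing $p=(d\cos\alpha,\,d\sin\alpha)$ and $p'=(d'\cos\alpha',\,d'\sin\alpha')$ with $|\alpha|,|\alpha'|\leq \pi/k$ and constraint $d'\cos\alpha'\leq d\cos\alpha$. Since $\sqrt{d^2+d'^2-2dd'\cos(\alpha-\alpha')}$ is a convex function of $d'$, so is $|pp'|+|p'q|/t$, and hence its maximum on the feasible interval occurs at the right endpoint $d'=d\cos\alpha/\cos\alpha'$ (the case $d_F(q,p')=d_F(q,p)$). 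At this endpoint $|pp'|$ collapses to the vertical gap $d\cos\alpha\,|\tan\alpha-\tan\alpha'|$, and optimizing over $\alpha,\alpha'\in[-\pi/k,\pi/k]$ pins the extremum down to $\alpha=\pi/k$, $\alpha'=-\pi/k$ (so that $p,p'$ sit on opposite boundary rays of $F_q$ at equal projected distance), reducing the inequality to $2\sin(\pi/k)+1/t\leq 1$.

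The closing step is a single-variable trigonometric estimate: using $\tan(\pi/4+\theta)=(1+\tan\theta)/(1-\tan\theta)$ with $\theta=2\pi/k$, the target bound $\tan(\pi/4+2\pi/k)\geq 1/(1-2\sin(\pi/k))$ clears denominators into $\tan(2\pi/k)\,(1-\sin(\pi/k))\geq \sin(\pi/k)$, which is routine for $k>8$. The main obstacle I anticipate is isolating the correct worst-case configuration in the geometric step (in particular, arguing convexity in $d'$ so that the boundary case $d_F(q,p')=d_F(q,p)$ dominates); once that reduction is in place, both the inductive bookkeeping and the final trigonometric verification are straightforward.
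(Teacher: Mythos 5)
Your proposal follows essentially the same route as the paper: induction on edge length, replacing $(p,q)$ by the strictly shorter edge $(p,\nn{F}{q})$ via Lemma~\ref{theta:0}, and reducing everything to the inequality $t\,|pp'|+|p'q|\le t\,|pq|$. The only divergence is in verifying that inequality — the paper argues via the projection of $\nn{F}{q}$ onto the segment $pq$ and two cases, while you reduce by convexity to the extremal configuration $2\sin(\pi/k)+1/t\le 1$ — and both verifications are valid (your only small slip is asserting the convex function is maximized at the \emph{right} endpoint; the left endpoint $d'=0$ must also be checked, where it gives exactly $|pq|$).
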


\begin{proof}
	We want to show that for every edge $e=\direct{u}{p}$ in $G$, there is a path in $H_k$ from $u$ to $p$ whose length is at most $\tan(\frac{\pi}{4}+\frac{2\pi}{k})\cdot|up|$. Let $t=\tan(\frac{\pi}{4}+\frac{2\pi}{k})$.
	
	To show this, we use the induction on the length of the edges. 
	For the base case, assume that $\direct{u}{p}$ is the shortest edge of the transmission graph. 
	Let $F$ be the cone of $\mathcal{F}$ such that $F_{p}$ contains $u$. 
	By construction, the directed edge from $\nn{F}{p}$ to $p$ is an edge of $H$. 
	Let $v=\nn{F}{p}$. 
	If $u=v$, then $\direct{u}{p}$ is an edge of $H$, and thus we are done. 
	Otherwise, $\direct{u}{v}$ is an edge of the transmission graph $G$ by Lemma~\ref{theta:0},
	and moreover, it is shorter than $\direct{u}{p}$, which contradicts that $\direct{u}{p}$ is the shortest
	edge of $G$.

	
	Now consider an edge $\direct{u}{p}$, and suppose that for every edge in the transmission graph shorter than $\direct{u}{p}$, there is a path connecting them whose length is at most $t$ times 
	their Euclidean distance. 
	Let $F$ be the cone of $\mathcal{F}$ such that $F_{p}$ contains $u$. 
	By construction, the directed edge from $\nn{F}{p}$ to $p$ is an edge of $H$. 
	Let $v=\nn{F}{p}$. 
	If $u=v$, then $\direct{u}{p}$ is an edge of $H$, and thus we are done. 
	Thus in the following, we assume that $u\neq v$.
	In this case, $\direct{u}{v}$ is an edge of the transmission graph $G$ by Lemma~\ref{theta:0},
	and moreover, it is shorter than $\direct{u}{p}$.
	
	Therefore, there is a path $\pi$ from $v$ to $p$ whose length is at most $t|vp|$ by the induction  hypothesis. Since $\direct{u}{v}$ is an edge of $G$, 
	the concatenation of $\pi$ and $\direct{u}{v}$ is a path of $G$ whose length is at most  $|uv|+t|vp|$. 
	Let $s$ be the projection point of $v$ to $\direct{u}{p}$. 
	See Figure~\ref{theta-1}. 
	We consider two cases with respect to the position of $s$.
	
	Since $t=\tan(\frac{\pi}{4}+\frac{2\pi}{k}) = \frac{1+\tan(\frac{2\pi}{k})}{1-\tan(\frac{2\pi}{k})}$, we have    
	$\frac{t-1}{t+1} = \tan(\frac{2\pi}{k}) > \tan(\frac{\pi}{k})$.

	\begin{figure}
		\centering
		\includegraphics[width=0.65\textwidth]{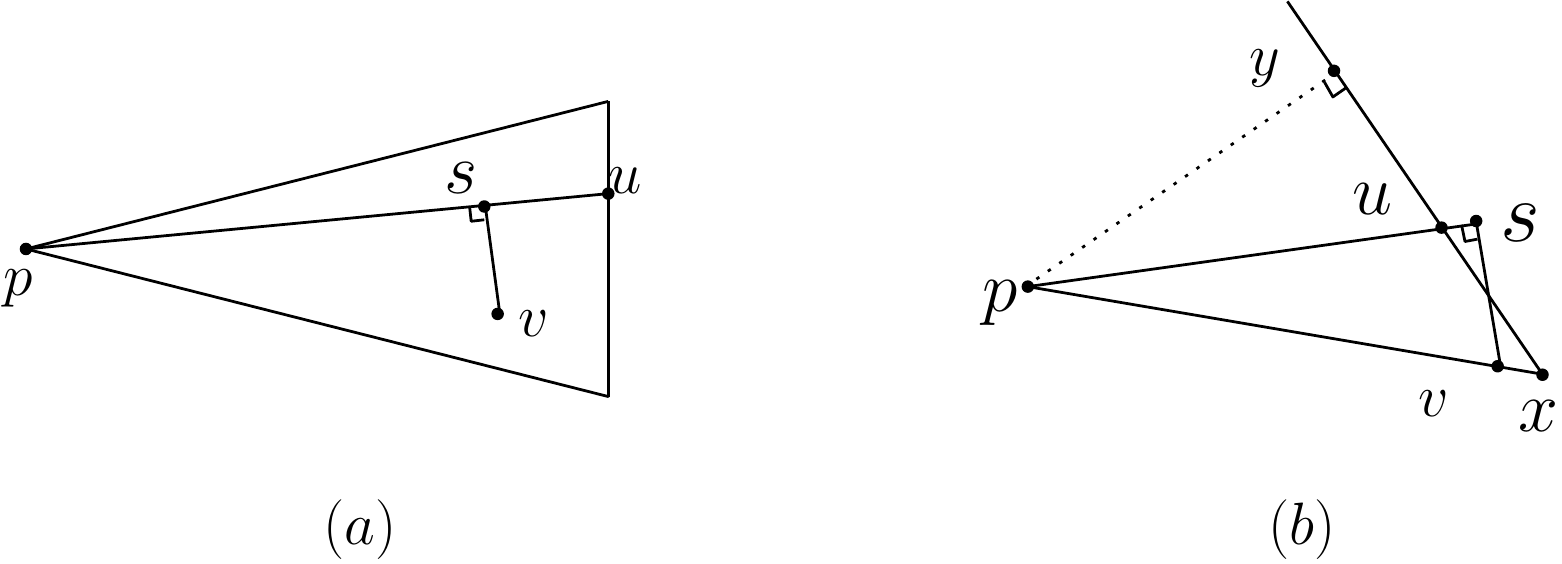}
		\hfil
		\caption{Two cases of the location of $s$}
		\label{theta-1}
	\end{figure}
	
	\paragraph{Case 1.}
	Suppose $s$ lies on $\direct{u}{p}$. See Figure~\ref{theta-1}(a). Since $u, v$ and $s$ are contained in $F_p$, 
	we have $\angle upv\leq 2\pi/k$. Then, 
	\begin{align*}
	|vp|+t|uv| &< |ps|+|sv|+ t(|us|+|sv|) \\
	&=  t(|us|+|ps|) +(t+1) |sv|+(1-t)|ps|.
	\end{align*}
	Note that $|sv|=|ps|\tan(\angle upv)$ and $\tan(\angle upv)\leq\tan(2\pi /k)=\frac{t-1}{t+1}$. 
	We obtain $(t+1)|sv|\leq(t-1)|ps|$. Then, 
	\begin{align*}
	t(|us|+|ps|) +(t+1) |sv|+(1-t)|ps| &\leq t(|us|+|ps|)\\
	&=t|up|
	\end{align*}
	\paragraph{Case 2.}
	Now suppose $s$ does not lie on $up$. See Figure~\ref{theta-1}(b) for illustration. 
	Let $x$ be an intersection point of the line through $pv$ and the line $l$ that passes $u$ which is orthogonal to the angle bisecting line of the cone. 
	Also, let $y$ be a projection point of $p$ into $l$.
	Similarly, by Lemma~\ref{theta:0}, there is a path $\pi$ from $u$ to $v$ such that the length of $\pi$ is less than $t|uv|$ by the induction hypothesis.
	Also, we consider the concatenation of $\pi$ and edge $vp$.
	Then, the length of this path is 
	\begin{align*}
	|vp| + t|uv| &\leq |px| + t|ux| \\
	&< |py| + (t+1)|yx| \\
	&< (1+(t+1)\tan(\pi /k))|py| \ \ (\because \tan(\angle yps) \leq \tan(\pi /k))   \\
	&< (1+(t+1)\tan(\pi /k))|up| \\
	&\leq t|up| \ \ (\because \tan(\pi /k) < (t-1)/(t+1)).
	\end{align*}
	
	\paragraph{}
	Therefore, for any case, there is a path from $u$ to $p$ with its length at most $t|up|$. This completes the proof.
\end{proof}

Note that $t=\tan(\frac{\pi}{4}+\frac{2\pi}{k})>1$ converges to $\tan(\frac{\pi}{4})=1$ as $k\rightarrow \infty$. 
Therefore, 
for any constant $t>1$, we can find a constant $k$ such that $H_k$ is a $t$-spanner of the transmission graph. 

\subsection{Efficient Algorithm for Computing the \texorpdfstring{$t$}{t}-Spanner} \label{sec:range}
In this section, we give an $O(n\log^3 n)$-time algorithm to construct $H_k$ for a constant  $k>6$. 
To compute all edges of $H_k$, for each point $p\in P$ and each cone $F\in\mathcal F$, consider the translated cone $F_p$ of $F$ so that the apex lies on $p$, and compute $\nn{F}{p}$. 
We show how to do this for a cone $F\in \mathcal F$ only. 
The others can be handled analogously.
Without loss of generality, we assume that the counterclockwise angle from the positive $x$-axis
to two rays of $F$ are $0$ and $2\pi/k$, respectively. 
Let $\ell_1$ and $\ell_2$ be two lines orthogonal to the two rays, respectively.

\paragraph{Approach of Kaplan et al.} 
The spanner constructed by Kaplan et al.~\cite{kaplan_et_al:LIPIcs:2015:5106} is a variation of the Yao graph. For each cone  $F\in\mathcal{F}$ and a point $p\in P$,
they pick the closest point in $F_p$ to $p$ among all points $q$ with $p\in D_q$. 
Since they choose the closest point in a cone with respect to the Euclidean distance, 
they need to fit grid cells into a cone. To resolve this, they use various data structures
 including
a compressed quadtree, a power diagram, a well-separated pair decomposition, and a dynamic nearest neighbor search data structure.

\paragraph{Our Approach.}
Instead, our construction is based on the $\Theta$-graph. 
Recall that we pick the closest point in a cone with respect to $d_F(\cdot,\cdot)$ instead of 
the Euclidean distance. 
The order of the points of $F_p\cap P$ sorted with respect to $d_F(p, \cdot)$ is indeed
the order of them sorted with respect to their projection points onto the angle-bisector of $F$. 

In the following, we present an $O(n\log^3 n)$-time algorithms for computing all edges of $H_k$ constructed for $F$.  
To do this, we use grid-like range trees proposed by Moidu. et al.~\cite{Agarwal:CCCG2013} together with a power diagram. 
With a slight abuse of notation, for a region $S$ contained in $F_p$,  
let $\nn{S}{p}$ be the point $q$ of $S\cap P$ with $\direct{q}{p}\in E$ that minimizes $d_F(p,q)$. 
See Figure~\ref{theta-4}(b). 

\subsubsection{Data structures.}


We construct  
the \emph{two-level grid-like range tree} introduced by Moidu et al. \cite{Agarwal:CCCG2013} with respect to $\ell_1$ and $\ell_2$.
It is a two-level balanced  
binary search tree. 
The first-level tree $T_1$ is a balanced  
binary search tree on the $\ell_1$-projections of the points of $P$.  
Each node $\alpha$ in the first-level tree corresponds to a slab $I(\alpha)$ orthogonal to $\ell_1$.   
It is also associated with the
second-level tree $T_{\alpha}$ which is a binary search tree, not necessarily balanced, 
on the points of $P\cap I(\alpha)$. Unlike the standard range 
tree~\cite{CGbook}, $T_\alpha$ is obtained from a balanced binary search tree $T_2$ on the  $\ell_2$-projections of the points of $P$. 
More specifically, we remove the subtrees rooted at
all nodes of $T_2$ whose corresponding parallelograms contain no point in $P\cap I(\alpha)$ in their union, 
and contract all nodes which have only 
one child. 
Then $T_\alpha$ is not necessarily balanced but a full binary tree of depth $O(\log n)$.

Given a point $p$ of $P$, 
there are $O(\log^2 n)$ interior-disjoint parallelograms whose union contains all points of $P\cap F_p$. 
We denote the set of these parallelograms by $\mathcal B_p$. 
By construction, the
cells of $\mathcal{B}_p$ are aligned for any point $p\in P$
so that we can consider them as a grid of size $O(\log n)\times O(\log n)$.  
See Figure~\ref{theta-3}.

\begin{lemma}[\cite{Agarwal:CCCG2013}]\label{lem:grid-like}
	The two-level grid-like range tree on a set of $n$ points in the plane
	can be computed in $O(n\log n)$ time. Moreover, its size is $O(n\log n)$.
\end{lemma}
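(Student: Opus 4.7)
The plan is to bound the size first and then give a construction achieving that bound in $O(n \log n)$ time. For the size bound, I would argue that $T_\alpha$, viewed after contraction, is a full binary tree whose leaves are exactly the points of $P \cap I(\alpha)$: a node $\beta$ of $T_2$ survives in $T_\alpha$ iff its slab $J(\beta)$ contains a point of $P \cap I(\alpha)$, and it survives contraction iff both of its children in $T_2$ also do. Since leaves of $T_2$ correspond to single points, the leaves of $T_\alpha$ are precisely the points in $P \cap I(\alpha)$. Hence $|T_\alpha| \le 2|P \cap I(\alpha)| - 1$. Summing over all $\alpha$ in $T_1$ and observing that each point lies in $O(\log n)$ slabs $I(\alpha)$ (one per ancestor in $T_1$) gives total size $O(n \log n)$.

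For the construction, I would first sort $P$ twice, once by $\ell_1$-projection and once by $\ell_2$-projection, and build the balanced binary search trees $T_1$ and $T_2$ from the sorted orders in $O(n \log n)$ time. Observe from the discussion above that $T_\alpha$ is exactly the LCA-closure in $T_2$ of the leaf set $L_\alpha = P \cap I(\alpha)$: its internal nodes are the lowest common ancestors in $T_2$ of consecutive elements of $L_\alpha$ (when $L_\alpha$ is listed in $\ell_2$-order), and its shape is a Cartesian tree on those LCAs keyed by depth in $T_2$.

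With this characterization in hand, I would preprocess $T_2$ for constant-time LCA queries in $O(n)$ time by standard techniques, and then compute the lists $L_\alpha$ for all nodes $\alpha$ of $T_1$ bottom-up by merging the sorted lists of each node's children, exactly as in classical range-tree construction. This step takes $O(n \log n)$ total time because each point participates on $O(\log n)$ levels. Finally, for each $\alpha$ I would construct $T_\alpha$ from $L_\alpha$ by querying the $|L_\alpha|-1$ consecutive LCAs in $O(1)$ each and assembling them into a Cartesian tree in $O(|L_\alpha|)$ time using the standard stack algorithm. Summing over $\alpha$ again yields $O(n \log n)$.

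The main conceptual obstacle is establishing the LCA-closure/Cartesian-tree characterization of $T_\alpha$ from the prune-and-contract definition, since after that point every step is a routine application of known building blocks (sorting, binary-search-tree construction, LCA preprocessing, merge-sort-style bottom-up aggregation, and Cartesian-tree construction). Everything else is bookkeeping to verify that the per-level work sums to $O(n)$ and the resulting tree has the claimed alignment with the grid structure used in Section~\ref{sec:range}.
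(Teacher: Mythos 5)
The paper does not actually prove this lemma: it imports the statement verbatim from Moidu et al.\ \cite{Agarwal:CCCG2013} and uses it as a black box, so there is no in-paper argument to compare yours against. Judged on its own, your proof is correct and complete in its essentials. The size bound is right: a node of $T_2$ survives pruning iff its slab contains a point of $P\cap I(\alpha)$, and since the children's slabs partition the parent's, every surviving internal node keeps at least one surviving child; hence after contraction $T_\alpha$ is full with leaves exactly $P\cap I(\alpha)$, giving $2|P\cap I(\alpha)|-1$ nodes, and summing over the $O(\log n)$ slabs containing each point yields $O(n\log n)$. (That partition observation is the one small step you should state explicitly, since it is what rules out an internal node of $T_2$ degenerating into a leaf of $T_\alpha$.) Your identification of $T_\alpha$ with the LCA-closure of $L_\alpha$ in $T_2$ is also correct --- a node survives contraction iff both children's subtrees meet $L_\alpha$, i.e.\ iff it is the LCA of two elements of $L_\alpha$, and these are generated by consecutive pairs in $\ell_2$-order --- and the resulting pipeline (build $T_1,T_2$; constant-time LCA preprocessing; bottom-up merging of the sorted lists $L_\alpha$; stack-based virtual-tree assembly in $O(|L_\alpha|)$ per node) does achieve $O(n\log n)$ construction time. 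If anything, your route is slightly heavier than necessary: since $T_2$ is a fixed balanced tree on the $\ell_2$-ranks, consecutive LCAs can be read off from the ranks directly without general LCA preprocessing, but this is a simplification, not a correction.
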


Then for each node $v$ of the second-level trees, we construct
a balanced binary search tree of the $\ell$-projections of $P\cap B(v)$ as the third-level tree,
where $\ell$ denotes the angle bisector of $F$.  
For a node $\beta$ of the third-level trees, let $P(\beta)$ denote the set of the points stored in the subtree rooted at $\beta$.  
we construct the power diagram of $P(\beta)$. The \emph{power diagram} is a weighted version of the Voronoi diagram. 
More specifically, the \emph{power distance} between a point $p$ and a disk $D_q$ is defined as $|pq|^2-r_q^2$. 
The power diagram partitions the plane into $n$ regions such that all points in a same region have the same closest disk in power distance.
The power diagram of $n$ disks can be constructed in $O(n\log n)$ time with $O(n)$ space. 
Also, we can locate the disk $D$ that minimizes the power distance from a query point $p$ in $O(\log n)$ time. 
As a consequence, we can determine in $O(\log n)$ time if the query point $p$ is in the union of disks by checking if $p\in D$~\cite{imai1985voronoi,kaplan_et_al:ACM-SIAM2017}. 

%

The construction time of the first, second, and third-level trees is $O(n\log^3 n)$ in total.  
Then we construct the power diagram for each node of a third-level tree in  a \emph{bottom-up} fashion.
In particular, we start from constructing the power diagrams of the leaf nodes. 
For each internal node, we compute its power diagram by merging the power diagram of its two children. 
Therefore, we can construct the power diagrams for all nodes of a third-level tree in 
$O(m\log m)$ time, where $m$ denotes the number of points corresponding to the root of the third-level tree. 
Since the sum of $m$'s over all third-level trees
is $O(n\log^2n)$, the whole data structure can be constructed in $O(n\log^3 n)$ time.

\subsubsection{Query algorithm.}
%
For each cell $B\in\mathcal{B}_p$, 
we compute $\nn{B}{p}$ in $O(\log^2 n)$ as follows.
We start from the root of the third-level tree associated with $B$. 
We check if there is a point $q\in P(\beta)$ with $\direct{q}{p}\in E$ using the power diagram stored in the root node. 
If it does not exist, $\nn{B}{p}$ does not exist.
Otherwise, we traverse the third-level tree until we reach a leaf node.  
For each node $\beta$ we encounter during the traversal, we consider the left child of $\beta$, say $\beta_L$. 
We check if there is a point $q\in P(\beta_L)$ with $\direct{q}{p}\in E$ using the power diagram stored in $\beta_L$. 
If it exists, we move to $\beta_L$. Otherwise, we move to the right child of $\beta$. We do this until we reach a leaf node, which 
stores  $\nn{B}{p}$. 
%
%

In the following, we show how to choose $O(\log n)$ cells of $\mathcal{B}_p$, one of which contains 
$\nn{B}{p}$.
The cells of $\mathcal{B}_p$ are aligned along $\ell_1$ and $\ell_2$.
They can be considered as a grid of $O(\log n)\times O(\log n)$ cells. We represent each row (parallel to $\ell_1$) by integers $1,\ldots,O(\log n)$, and each column (parallel to $\ell_2)$ by integers $1,\ldots,O(\log n)$.  
We represent each cell of $\mathcal{B}_p$ by a pair $B(i,j)$ of indices such that $i$ is the row-index and $j$ is the column-index of the cell. 
For illustration, see Figure~\ref{theta-3}(a). 
A cell $B=B(i,j)$ is said to be \emph{useful} if $\nn{B}{p}$ exists. 
Also, a useful cell $B=B(i,j)$ is called an \emph{extreme cell} of $\mathcal{B}_p$ if 
no cell $B(i',j')$ is useful for indices $i'$ and $j'$ such that $i-j = i'-j'$ and $i' <i$. 

\begin{lemma}\label{lem:extreme}
	The cell of $\mathcal{B}_p$ containing $\nn{F}{p}$ is an extreme cell. Moreover, the number of extreme cells of $\mathcal{B}_p$ is $O(\log n)$. 
\end{lemma}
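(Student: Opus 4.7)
The plan is to prove the two assertions of the lemma separately: the cardinality bound on extreme cells is a short combinatorial observation, while showing that $\nn{F}{p}$ lies in an extreme cell requires a brief geometric computation about the angle-bisector direction. For the cardinality bound, the row and column indices of $\mathcal{B}_p$ each range over $\{1,\ldots,O(\log n)\}$, so the anti-diagonal index $i-j$ takes only $O(\log n)$ distinct values; by definition each anti-diagonal contributes at most one extreme cell (the useful one of smallest row index), so there are $O(\log n)$ extreme cells in total.

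The main content is the first assertion. I would translate it to a question about the projection onto the angle bisector $\ell$ of $F_p$: since $d_F(p,q)$ equals the distance from $p$ to the foot of the perpendicular from $q$ onto $\ell$, $\nn{F}{p}$ is the useful point of smallest $\ell$-coordinate measured from $p$. A short computation in the ray coordinates of $F_p$ shows that this $\ell$-coordinate equals a positive scalar multiple of $u(q)-v(q)$, where $u,v$ denote the projections onto $\ell_1$ and $\ell_2$; the factor arises because $\ell_1,\ell_2$ are orthogonal to the two bounding rays of $F$ and $\ell$ makes equal angles with them.

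Combined with this identity, I would appeal to the indexing convention of Section~\ref{sec:range} (illustrated in Figure~\ref{theta-3}(a)): rows partition points by $u$ in one orientation and columns partition by $v$ in the opposite orientation, so that for $i'<i$ and $i'-j'=i-j$, the entire range of $u-v$ over $B(i',j')$ lies below the entire range of $u-v$ over $B(i,j)$. The first assertion then follows immediately by contradiction: if $B(i,j)$ contained $\nn{F}{p}$ but were not extreme, then a useful $B(i',j')$ with $i'<i$ on the same anti-diagonal would contain some point $q'$ with $\direct{q'}{p}\in E$ and $d_F(p,q')<d_F(p,\nn{F}{p})$, contradicting the definition of $\nn{F}{p}$.

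The main obstacle is pinning down this monotonicity of $u-v$ across cells on a fixed anti-diagonal. Concretely, writing $[a_i,b_i]$ and $[c_j,d_j]$ for the $u$- and $v$-ranges of row $i$ and column $j$, it amounts to the inequality $b_{i'}-c_{j'}\le a_i-d_j$ whenever $i'<i$ and $j'<j$ on the same anti-diagonal. Once the sign convention is fixed so that $b_{i'}\le a_i$ and $c_{j'}\ge d_j$, this is a one-line consequence of the interior-disjointness of the parallelograms of $\mathcal{B}_p$, and the rest of the argument is immediate.
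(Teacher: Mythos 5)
Your argument is correct and follows essentially the same route as the paper's proof: the $O(\log n)$ bound via counting distinct values of $i-j$ is identical, and your first part makes explicit---through the identity expressing $d_F(p,\cdot)$ as a positive multiple of the difference of the $\ell_1$- and $\ell_2$-projections, plus the interval structure of rows and columns---the monotonicity along diagonals that the paper simply asserts by saying $B(i+w,j+w)$ lies in the upper-right part of $B(i,j)$ and hence has larger $d_F$-values. There is no gap; your version just supplies the computation the paper leaves implicit.
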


\begin{proof}
	Suppose $\nn{B}{p}$ exists for $B=B(i,j)$. For each integer $w>0$, $B(i+w, j+w)$ 
	lies in the upper right part of $B(i,j)$.
	This means that $d_F(p,q)$ is at least  $d_F(p,q')$ for any two points $q\in B(i+w,j+w)$ and $q'\in B(i,j)$.
	Thus, the cell containing $\nn{F}{p}$ is an extreme cell.
	
	The number of extreme cells is equal to the number of distinct $i-j$ values among the cells $B(i,j)\in \mathcal{B}_p$.
	This number is $O(\log n)$ since each index of rows and columns is a positive integer at most $O(\log n)$. 
\end{proof}

\begin{figure}
	\centering
	\includegraphics[width=0.7\textwidth]{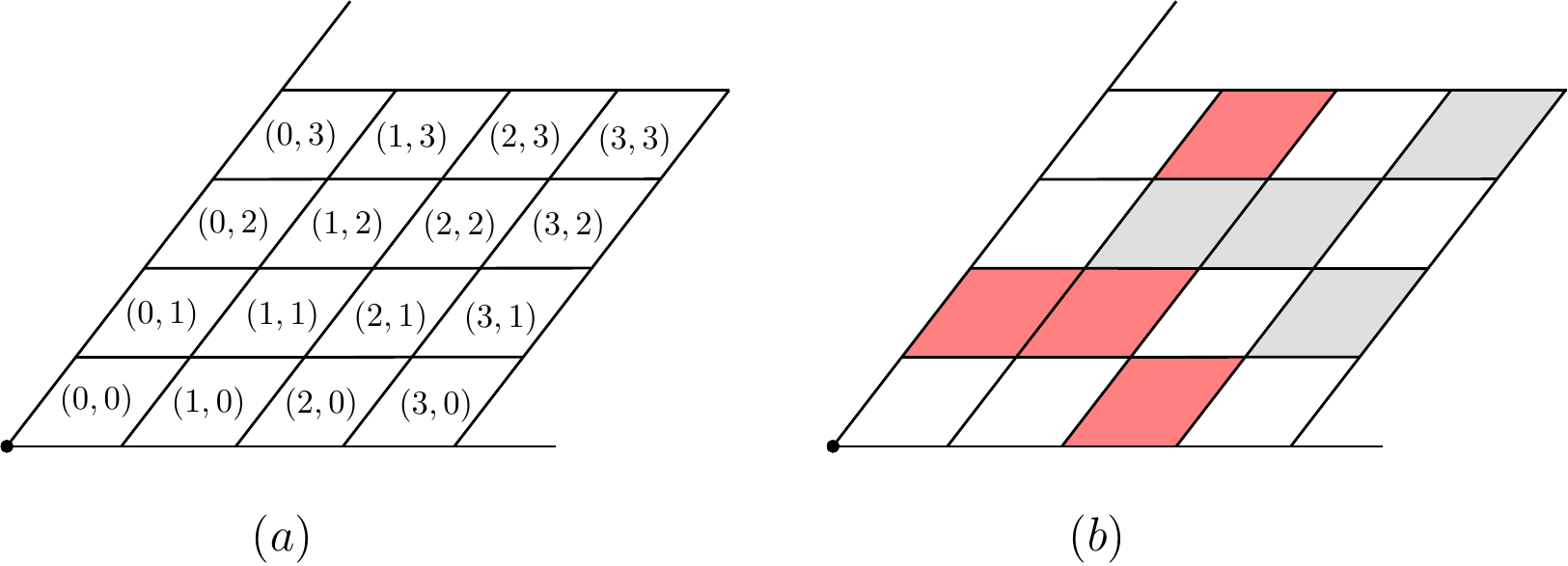}
	\caption{(a) Index of the grid-like range tree  (b) Useful cells are colored with gray or red, and extreme cells are colored red.}
	\label{theta-3}
\end{figure}

To compute $\nn{F}{p}$, we first compute $\mathcal{B}_p$ in $O(\log^2 n)$ time. 
For each cell $B\in\mathcal{B}_p$, we check if it is useful using the power diagram of $P\cap B$, which is stored
in the root node of the third-level tree in $O(\log^3 n)$ time in total.  
Then we choose $O(\log n)$ extreme cells among the useful cells of $\mathcal{B}_p$.
For each cell $B$ of them, we compute $\nn{B}{p}$ in $O(\log^2 n)$ time, and thus the total query time is $O(\log^3 n)$.

\begin{theorem} \label{theta:2}
Given a point set $P$ and a constant $t>1$, we can construct a $t$-spanner of the transmission graph of $P$ within $O(n\log^3 n)$ time.
\end{theorem}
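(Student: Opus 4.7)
The plan is to reduce the theorem to assembling the pieces that Section~\ref{sec:bfs} has already built. First I would choose the constant $k$ depending on $t$: since $\tan(\pi/4+2\pi/k)\to 1$ as $k\to\infty$, I can pick a constant integer $k>8$ with $\tan(\pi/4+2\pi/k)\le t$. By Lemma~\ref{theta:1}, the graph $H_k$ is then a $t$-spanner of $G$, and by construction each point of $P$ has at most $k=O(1)$ incoming edges in $H_k$, so $|H_k|=O(n)$. So the theorem reduces to showing that $H_k$ can be built in $O(n\log^3 n)$ time.

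Next I would handle the cones of $\mathcal{F}$ one at a time. Fix a cone $F\in\mathcal{F}$, and recall the observation from Section~\ref{sec:range} that ordering points of $F_p\cap P$ by $d_F(p,\cdot)$ is the same as ordering their projections onto the angle-bisector of $F$. I would then build the two-level grid-like range tree of Moidu et al., whose first and second levels use the directions $\ell_1$ and $\ell_2$ orthogonal to the rays of $F$; by Lemma~\ref{lem:grid-like} this takes $O(n\log n)$ time and, for any query point $p\in P$, returns in $O(\log^2 n)$ time a canonical set $\mathcal{B}_p$ of $O(\log^2 n)$ parallelograms covering $P\cap F_p$ and aligned as an $O(\log n)\times O(\log n)$ grid. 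At the third level, attached to each node of a second-level tree, I would add a balanced binary search tree on the angle-bisector projections of the points stored there, and at every node of these third-level trees I would store the power diagram of the associated disks. The bottom-up merge of power diagrams costs $O(m\log m)$ per third-level tree with $m$ root points, and summing $m$ over all third-level trees gives $O(n\log^2 n)$, so the whole structure is built in $O(n\log^3 n)$ time across all $O(1)$ cones.

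To compute the edges incident to a fixed $p\in P$, I would first retrieve $\mathcal{B}_p$, then identify the useful cells by locating $p$ in the root power diagram of each cell in $O(\log^3 n)$ total time. Lemma~\ref{lem:extreme} guarantees that the cell containing $\nn{F}{p}$ is among the $O(\log n)$ extreme cells, so I only need to finish the search inside those. For each extreme cell $B$, I descend the third-level tree of $B$, at each step using the power diagram of the left child to test whether $P(\beta_L)$ contains an in-neighbor of $p$ and recursing accordingly; this gives $\nn{B}{p}$ in $O(\log^2 n)$ time. Taking the minimum of $d_F(p,\cdot)$ over the $O(\log n)$ extreme cells yields $\nn{F}{p}$ in $O(\log^3 n)$ time per point, and summing over the $n$ points and the $O(1)$ cones gives the claimed $O(n\log^3 n)$ bound for constructing $H_k$.

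The main obstacle I expect is correctly arguing that the third-level power diagrams can be assembled within the claimed budget: the bookkeeping relies on the fact that each point of $P$ appears in only $O(\log^2 n)$ third-level trees in total, which then makes the bottom-up merging telescope to $O(n\log^3 n)$. Once this combinatorial bound is in hand, correctness of the query follows directly from Lemma~\ref{lem:extreme} and the power-diagram point-location primitive of \cite{imai1985voronoi,kaplan_et_al:ACM-SIAM2017}, and the spanner property is already delivered by Lemma~\ref{theta:1}.
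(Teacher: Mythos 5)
Your proposal is correct and follows essentially the same route as the paper: choose $k$ via Lemma~\ref{theta:1}, build the two-level grid-like range tree with third-level trees on angle-bisector projections carrying bottom-up-merged power diagrams, and answer each $\nn{F}{p}$ query by restricting to the $O(\log n)$ extreme cells of $\mathcal{B}_p$ guaranteed by Lemma~\ref{lem:extreme}. The accounting you flag as the main obstacle (total third-level tree size $O(n\log^2 n)$, hence $O(n\log^3 n)$ construction) is exactly the bound the paper establishes, so there is no gap.
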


\subsection{Computing a BFS Tree Using a \texorpdfstring{$t$}{t}-Spanner}\label{apnd:bfs}
In this section, we construct a BFS tree for the transmission graph $G$.
For a root $s$, a \emph{BFS tree} is a shortest-path tree of $G$ rooted at $s$ where the length of a path 
is measured by the number of edges in the path.  

Kaplan et al. proposed an algorithm of constructing a BFS tree via their $t$-spanner that is a variant of the Yao Graph.
They utilize the technique proposed by Cabello et al. \cite{cabello2015shortest} for their algorithm and correctness.
Our algorithm is exactly the same as the algorithm by Kaplan et al. However, since our $t$-spanner is based on the $\Theta$-graph instead of the Yao Graph,   we have to prove the correctness. 

For a given root $s$ of a BFS tree, 
let $W_i$ be the set of points in $P$ with depth $i$. 
The correctness of the algorithm follows from the following lemma.

\begin{lemma} \label{bfs:1}
	Let $H$ be a $t$-spanner as in Theorem \ref{theta:2}, and let $v\in W_{i+1}$. Then, there is a path $\pi=up_k\ldots p_1v$ in $H$ with
	$u\in W_i$ and $p_j\in W_{i+1}$ for all indices $j\in[1,k]$.
\end{lemma}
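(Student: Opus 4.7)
The plan is to prove a slightly stronger claim by strong induction on the Euclidean length of an edge of $G$: for every pair $(u',w)$ with $u'\in W_i$, $w\in W_{i+1}$, and $\direct{u'}{w}\in E$, there is a path in $H$ from some $u\in W_i$ to $w$ whose internal vertices all lie in $W_{i+1}$. Once this is established, the lemma follows by picking any in-neighbour $u'\in W_i$ of $v$ guaranteed by $v\in W_{i+1}$ and applying the strengthened claim to $(u',v)$. The induction is well-founded because there are only finitely many edges.

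For the inductive step, fix such a pair $(u',w)$ and let $F\in\mathcal{F}$ be the cone with $u'\in F_w$. Set $u''=\nn{F}{w}$, so that $\direct{u''}{w}\in H$ by construction of $H=H_k$. If $u''=u'$, the single-edge path $u',w$ is the desired path. Otherwise $d_F(w,u'')<d_F(w,u')$ by minimality of $u''$, and Lemma~\ref{theta:0} (applicable since $k>6$) yields $\direct{u'}{u''}\in E$ together with $|u'u''|<|u'w|$.

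It remains to classify the BFS-depth of $u''$. From $\direct{u'}{u''}\in E$ and $u'\in W_i$ we get $\mathrm{depth}(u'')\leq i+1$; from $\direct{u''}{w}\in E$ and $w\in W_{i+1}$ we get $\mathrm{depth}(u'')\geq i$, since otherwise $w$ would have depth at most $i$. If $\mathrm{depth}(u'')=i$, then $u''\in W_i$ and the one-edge path using $\direct{u''}{w}\in H$ works. If $\mathrm{depth}(u'')=i+1$, then $(u',u'')$ is a $W_i$-to-$W_{i+1}$ edge strictly shorter than $(u',w)$, so the induction hypothesis yields a path in $H$ from some $u\in W_i$ to $u''$ with all internal vertices in $W_{i+1}$; appending $\direct{u''}{w}\in H$ produces the required path for $w$.

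The step that needs the most care is the strengthening of the induction statement: proving the lemma directly over $v\in W_{i+1}$ offers no smaller subproblem to recurse on, so one must induct instead on the Euclidean length of a $W_i$-to-$W_{i+1}$ edge. Lemma~\ref{theta:0} is then exactly the tool that supplies the shorter edge needed, and the dichotomy $\mathrm{depth}(u'')\in\{i,i+1\}$ closes the argument; no separate base case is required.
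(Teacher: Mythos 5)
Your proof is correct and takes essentially the same route as the paper's: the paper inducts on $s(v)=\min\{|uv| : u\in W_i,\ \direct{u}{v}\in E\}$ over vertices $v\in W_{i+1}$, whereas you induct on the Euclidean length of an arbitrary $W_i$-to-$W_{i+1}$ edge, but both arguments hinge on Lemma~\ref{theta:0} supplying a strictly shorter qualifying edge and on the same dichotomy that the depth of $\nn{F}{w}$ is $i$ or $i+1$. The reformulation of the induction variable is only cosmetic.
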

\begin{proof}
	For a point $v\in W_{i+1}$,  let $s(v)$ denote the smallest Euclidean distance between $v$ and a point $u\in W_i$ such that $\direct{u}{v}\in E$. 
	We prove the lemma using induction on $s(\cdot)$ for the points in $W_{i+1}$. 
	
	For the base case, let $v\in W_{i+1}$ have the smallest value of $s(v)$.
	Let $u\in W_i$ be the point with $s(v)=|uv|$. 
	If $u=\nn{F}{v}$ for a cone $F$ of $\mathcal{F}$, we are done. Otherwise, we consider 
	the cone $F\in\mathcal{F}$ such that $F_v$ contains $u$, and we consider $w=\nn{F}{v}$.
	By Lemma \ref{theta:0}, $\direct{u}{w}$ and $\direct{w}{v}$ are edges of the transmission graph.
	Also, $|uw|$ is less than $|uv|=s(v)$. 
	Let $k$ be the index with $w\in W_k$. We have $k\leq i+1$ since $u\in W_i$ and $\direct{u}{w}\in E$. 
	Moreover, $k \neq i+1$. (Otherwise, $s(w)<s(v)$, which makes contradiction.)
	Also, $k\geq i$ since $\direct{w}{v}\in E$ and $v\in W_{i+1}$. 
	Therefore, $\pi=wv$ is desired path because $w\in W_i$.
	
	Now, we consider a node $v\in W_{i+1}$, and suppose that all $v'\in W_{i+1}$ with $s(v')<s(v)$ satisfy the condition.  
	Similarly, if $u=\nn{F}{v}$, we are done. 
	Otherwise, we consider $w=\nn{F}{v}$  where $F_v$ contains $u$. The same properties hold by Lemma~\ref{theta:0}. 
	In particular, $w\in W_k$ with $k=i$ or $k=(i+1)$. 
	If $k=i$, we are done. 
	Otherwise, $s(v)=|uv|>|uw|\geq s(w)$. 
	Now, there is a path $\pi=up_k...p_1w$ with $u\in W_i$ and $p_j\in W_{i+1}$ for all $j$ due to the induction hypothesis on $w$. 
	Then, the path $\pi'=up_k...p_1wv$ satisfies the condition. This completes the induction.
\end{proof}



Cabello et al. proposed a BFS tree algorithm for unit-disk graphs by considering the edges of the Delaunay triangulation of the point set.  
Later, Kaplan et al. proposed a $t$-spanner based on a variation of the Yao graph. 
Their t-spanner provides similar properties for transmission graphs as the Delaunay triangulation does for unit-disk graphs. 
Our $t$-spanner also satisfies this property by Lemma~\ref{bfs:1}.
Indeed, Lemma~\ref{bfs:1} is the same as \cite[Lemma 1]{cabello2015shortest} except that $H$ is our spanner in Lemma~\ref{bfs:1} and it
is the Delaunay triangulation in \cite[Lemma 1]{cabello2015shortest}.
%
Then, we are able to reuse the algorithm of Cabello et al.
We remark that this algorithm takes $O(n\log n)$ time.

\begin{theorem} \label{bfs:2}
Let $P$ be a set of $n$ points, each associated with a radius.
Given a $t$-spanner $H$ of the transmission graph $G$ of $P$ as in Theorem~\ref{theta:2},  
we can construct a BFS tree of $G$ within $O(n\log n)$ time.
\end{theorem}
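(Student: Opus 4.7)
The plan is to port the BFS-tree algorithm of Cabello et al.~\cite{cabello2015shortest}, originally designed for unit-disk graphs with the Delaunay triangulation as its combinatorial backbone, by substituting our $t$-spanner $H$ in place of the Delaunay triangulation. Their correctness argument rests on a single structural fact, \cite[Lemma 1]{cabello2015shortest}, stating that every vertex at BFS depth $i+1$ is reachable from some vertex of depth $i$ via a path in the auxiliary graph whose intermediate vertices all have depth $i+1$. Lemma~\ref{bfs:1} establishes exactly this property for $H$, so the algorithm and its analysis carry over essentially verbatim.

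Concretely, I would process the layers $W_0, W_1, \ldots$ in order. Given the frontier $W_i$, I would traverse out-edges in $H$ starting from $W_i$; Lemma~\ref{bfs:1} guarantees that by allowing the traversal to continue through vertices already placed in $W_{i+1}$, every member of $W_{i+1}$ is discovered without ever jumping ahead to a deeper layer. The only nontrivial per-vertex operation is deciding whether a candidate $v$ is actually adjacent in $G$ (not merely in $H$) to some vertex of $W_i$; as in Cabello et al., this is answered by a dynamic geometric data structure over the unprocessed vertices that, for transmission graphs, reduces to dynamic point-location in the union of the disks $\{D_u : u \in W_i\}$ and supports updates and queries in $O(\log n)$ time~\cite{chan:LIPIcs:2019:10428,kaplan_et_al:ACM-SIAM2017}.

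The main obstacle is confirming that Cabello et al. use the Delaunay triangulation only through their structural lemma and through routine adjacency scans, both of which transfer to $H$ since $H$ has $O(n)$ edges. Once this is checked, each vertex is inserted into and deleted from the auxiliary structure exactly once, and each edge of $H$ is scanned $O(1)$ times, yielding the claimed $O(n \log n)$ running time.
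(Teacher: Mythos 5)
Your proposal matches the paper's own argument essentially verbatim: the paper likewise reduces everything to Lemma~\ref{bfs:1} as the substitute for \cite[Lemma 1]{cabello2015shortest} and then reuses the layer-by-layer algorithm of Cabello et al.\ to conclude the $O(n\log n)$ bound. One small imprecision: the cited dynamic structures do not give $O(\log n)$ updates (Chan's gives $O(\log^2 n)$ insertions and $O(\log^4 n)$ amortized deletions), but this is harmless since a static power diagram built afresh for each layer $W_i$ answers the ``is $v$ in $\bigcup_{u\in W_i} D_u$'' test in $O(\log n)$ per query and $O(|W_i|\log n)$ preprocessing, which still sums to $O(n\log n)$.
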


\section{Reachability Oracle for Unbounded Radius Ratio}\label{sec:oracle}
In this section, we present a data structure of size $O(n^{5/3})$ so that given any two points $p$ and $q$ in $P$, we can check if $p$ is reachable from $q$ in $O(n^{2/3})$ time. 
Moreover, this data structure can be constructed in $O(n^{5/3})$ time. Note that this result is independent to the radius ratio $\Psi$.


We say a set of disks is $k$-\emph{thick} if for any point $p$ in the plane, there are at most $k$ disks that contains $p$. 
Similarly, we say a transmission graph is $k$-\emph{thick} if its underlying disk set is $k$-thick. 

\begin{lemma}[{\cite[Theorem 5.1]{Teng:MiTeVa91}}] \label{4:separator}
 For any set $\mathcal D$ of disks that is $k$-thick, there is a circle $S$ intersecting $O(\sqrt{kn})$ disks of $\mathcal D$ such that  the number of disks of $\mathcal D$ with $|S_\textnormal{in}|, |S_\textnormal{out}|\leq \frac{2n}{3}$, 
 where $S_\textnormal{in}$ and $S_\textnormal{out}$ denote the set of disks of $\mathcal D$ contained in the interior of $S$ and
 the exterior of $S$, respectively. 
 In this case, We call $S$ a \emph{separating circle}. 
 Moreover, we can compute $S$, $S_\textnormal{in}$ and $S_\textnormal{out}$ in linear time. 
\end{lemma}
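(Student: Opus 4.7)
The plan is to prove this by the Miller--Teng--Thurston--Vavasis lifting scheme, which reduces the problem of finding a well-balanced separating circle in the plane to the problem of finding a hyperplane through the origin in $\mathbb{R}^3$ that cuts few lifted caps.

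First I would lift each disk $D\in\mathcal{D}$ to a spherical cap on the unit sphere $\mathbb{S}^2\subset \mathbb{R}^3$ using stereographic projection: the boundary circle $\partial D$ lifts to a circle on $\mathbb{S}^2$, which bounds a cap $\hat{D}$. Circles in the plane correspond bijectively to planar cross-sections of $\mathbb{S}^2$, so a plane $\Pi$ in $\mathbb{R}^3$ that does not separate the two endpoints of a cap $\hat{D}$ corresponds to a circle in the plane that does not cross $\partial D$, while a plane that does separate them corresponds to a circle that crosses $\partial D$. Each cap carries a representative point (say its ``north pole''); collect these $n$ points as $\widehat{P}\subset \mathbb{S}^2$.

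Next I would compute a centerpoint $c\in \mathbb{R}^3$ of $\widehat{P}$, which exists by the centerpoint theorem and guarantees that every closed halfspace through $c$ contains at least $n/4$ points of $\widehat{P}$ (this can in fact be sharpened to $n/3$ on $\mathbb{S}^2$ using a slightly more refined argument). I would then apply a conformal automorphism of $\mathbb{S}^2$ that sends $c$ to the origin; this preserves the class of circles on $\mathbb{S}^2$ (equivalently, circles and lines in the plane), so it suffices to work with the transformed set. Now I choose a uniformly random hyperplane $\Pi$ through the origin and let $S$ be the corresponding circle in the plane. The balance property follows because every halfspace through the origin contains at least $n/3$ of the lifted points, forcing both $|S_\textnormal{in}|$ and $|S_\textnormal{out}|$ to be at most $2n/3$.

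For the cut bound, I would bound the expected number of caps intersected by $\Pi$. A standard calculation shows that the probability a random great circle crosses a cap of spherical radius $\rho$ is $O(\rho)$, so the expected number of cuts is $O(\sum_D \rho_D)$. Using $k$-thickness and the fact that the caps cover area proportional to $\sum \rho_D^2$, a Cauchy--Schwarz step gives $\sum \rho_D \leq \sqrt{n\sum \rho_D^2} = O(\sqrt{kn})$, so a random great circle cuts $O(\sqrt{kn})$ caps in expectation; hence some deterministic choice achieves this bound. The main obstacle here is handling the case of disks of very different radii, which is precisely what the conformal transformation takes care of after moving the centerpoint to the origin, since it effectively normalizes the lifted point configuration.

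Finally, for the linear running time I would invoke the known linear-time centerpoint approximation on $\mathbb{S}^2$ and a derandomization that searches a small deterministic family of candidate hyperplanes through the origin, verifying the balance and cut conditions for each candidate in linear time; then $S_\textnormal{in}$ and $S_\textnormal{out}$ are read off by a single pass over $\mathcal{D}$. The main obstacle I anticipate is the derandomization step, since the original proof of \cite{Teng:MiTeVa91} is randomized and obtaining the linear time bound requires a careful sampling scheme to locate both a good centerpoint and a good separating hyperplane deterministically.
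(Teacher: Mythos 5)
The paper offers no proof of this lemma at all: it is imported verbatim from \cite{Teng:MiTeVa91}, so there is no internal argument to compare against. Your sketch is essentially a faithful reconstruction of the standard Miller--Teng--(Thurston--)Vavasis proof from that source: stereographic lifting of the disks to caps, a centerpoint of the lifted representatives, a conformal normalization moving it to the origin, a random great circle for the balance condition, and the bound on the expected number of cut caps via the probability $O(\rho_D)$ of cutting a cap of angular radius $\rho_D$ followed by Cauchy--Schwarz, $\sum_D \rho_D \le \sqrt{n\sum_D \rho_D^2} = O(\sqrt{kn})$, where $\sum_D \rho_D^2 = O(k)$ because $k$-thickness (which is preserved by the lifting and the conformal map) bounds the total cap area by $O(k)$. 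That is the right proof and the right decomposition of the difficulties.

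Two places are looser than they should be. First, the balance constant: a generic centerpoint in $\mathbb{R}^3$ only guarantees $n/4$ points in each closed halfspace, which yields a $3n/4$ split --- and indeed that is how the Miller--Teng--Vavasis theorem is usually stated. Your parenthetical claim that this sharpens to $n/3$ for points on $\mathbb{S}^2$ is precisely what is needed to reach the $2n/3$ in the statement, but you assert it without justification, and it is not a routine consequence of the centerpoint theorem; as written this is the one substantive gap if the lemma is to be proved with the stated constant. (For the recursion in Section~3 any balance constant bounded away from $1$ suffices, so nothing downstream depends on $2/3$ versus $3/4$.) Second, the linear running time in the cited source comes from a \emph{randomized} algorithm (an approximate centerpoint by random sampling, then sampling and verifying candidate great circles); a deterministic linear-time version is not straightforward, so rather than promising a derandomization you cannot carry out, the honest statement is that the construction is randomized linear time, which is all the source provides and all the paper needs.
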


Consider a separating circle $S$ of the disk set induced by $P$. 
By Lemma~\ref{4:separator}, $P$ is partitioned into three sets $\sinn, \sout$, 
 and $S_\textnormal{cross} =\{p \in P \mid D_p \cap S \neq\emptyset \}$ such that  
 every path in $G$ connecting a point of $\sinn$ and a point of $\sout$ 
 visits a point in $S_\textnormal{cross}$.
We call $S_\textnormal{cross}$ a \emph{separator} of $G$ (or $P$).
Using separators, we build a \emph{separation tree} by repeatedly applying the algorithm in Lemma~\ref{4:separator}.
As we will see in Section~\ref{sec:remain}, the separation tree enables us to construct a reachability oracle efficiently. 
However, the transmission graph of a set of $n$ points is $n$-thick in the worst case, and in this case, 
Lemma~\ref{4:separator} does not give a non-trivial bound. 

To resolve this, we partition $P$ into $O(n^{2/3})$ \emph{chains}, each consisting of $O(n^{1/3})$ points of $P$, and 
the \emph{remaining set} $R$ of points of $P$ not belonging to any of the chains. Then we show that $R$ is $O(n^{1/3})$-thick, and thus
Lemma~\ref{4:separator} gives an efficient reachability oracle for the subgraph of $G$ induced by $R$. 
Additionally, 
we construct an auxiliary data structure for each chain. 

\begin{figure}
\centering
   \includegraphics[width=0.7\textwidth]{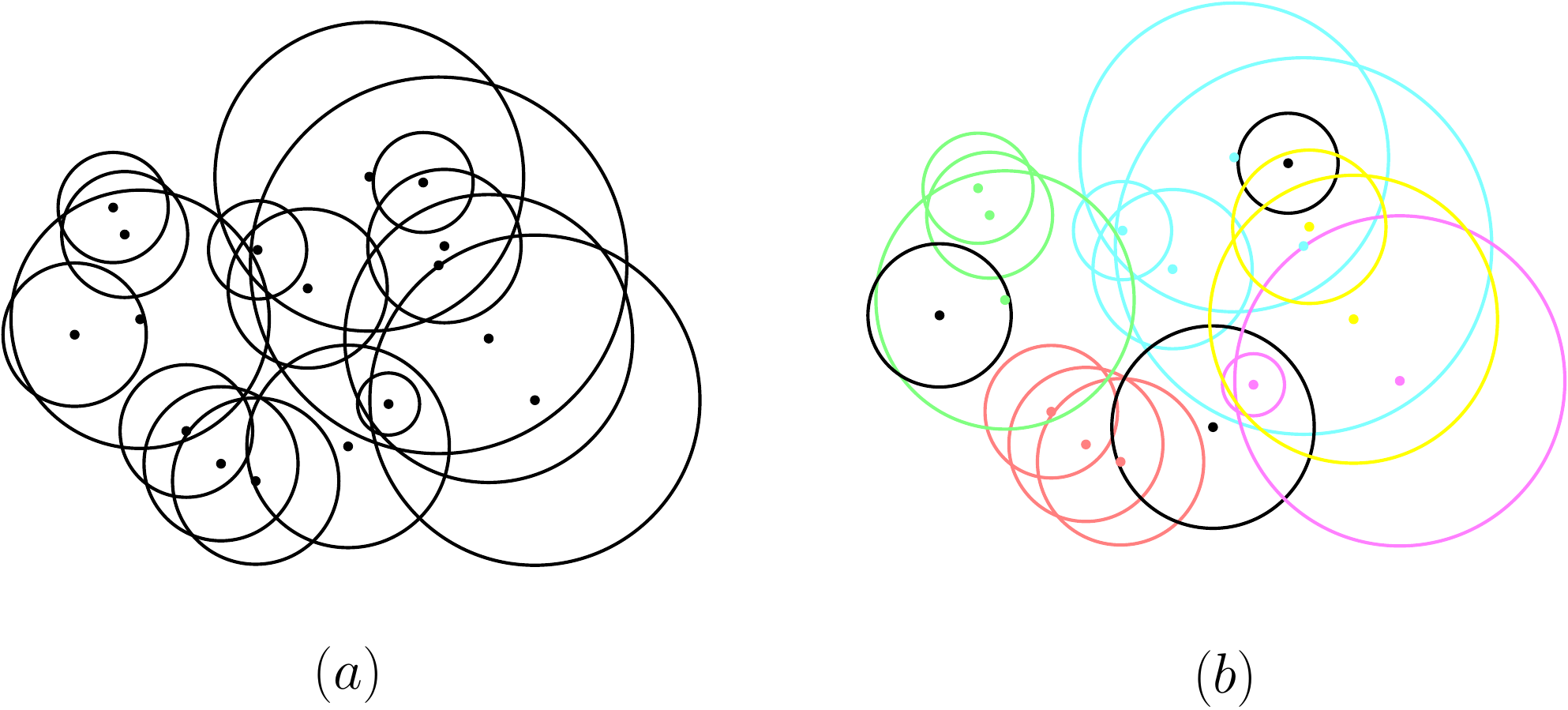}
\caption{\label{chain-1} (a) A set $P$ of points associated with radii.  (b) 
	The disks in the same chain are colored with the same color, and the points in $R$ are colored black.}
\end{figure}

\subsection{Chain}
 We call a sequence $\langle p_1, \ldots, p_k\rangle$ of points of $P$ sorted in the ascending order of their associated radii 
  a \emph{chain} if $\direct{p_j}{p_i}\in E$ for all indices $i$ and $j$ with $1\leq i < j$. 
 In other words, $|p_ip_j|\leq r_{p_j}$. 
 In this section, we construct $O(n^{1/3})$-length chains as many as possible 
 so that the remaining set $R$ is $k$-thick for a small $k$.

To compute chains, we need a dynamic data structure for a set $\mathcal D$ of disks, dynamically changing by insertions and deletions, 
such that for a query point, we can check if there is a disk of $\mathcal D$ that contains the query point. 
This can be obtained using dynamic 3-D halfspace lower envelope data structure, which
is given by \cite{chan:LIPIcs:2019:10428}, together with the standard lifting transformation. 
In particular, this data structure can be built in $O(n\log n)$ time and its insertion time, deletion time and query time are $O(\log^2 n)$, $O(\log^4 n)$ and $O(\log^2 n)$, respectively. 
For the convenience, we denote this data structure by $\mathsf{DNN}(\mathcal D)$.

\begin{lemma}\label{chain:1}
Let $\mathcal{D}$ be a set of disks, and $p$ be a point in the plane.
Given $\mathsf{DNN}(\mathcal D)$, we can check if there are $n^{1/3}$ disks of $\mathcal{D}$ containing $p$ 
in $O(n^{1/3}\log^4 n)$ time. 
Moreover, if they exist, we can return them, and delete them from $\mathcal D$ and $\mathsf{DNN}(\mathcal D)$ within the same time bound.
\end{lemma}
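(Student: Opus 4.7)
The plan is to use $\mathsf{DNN}(\mathcal D)$ in the obvious greedy loop. At each iteration, I would query $\mathsf{DNN}(\mathcal D)$ for a disk of $\mathcal D$ containing $p$; if one is returned, I would record it, delete it from $\mathcal D$ and from $\mathsf{DNN}(\mathcal D)$, and iterate. The loop terminates either when $n^{1/3}$ disks have been recorded or when a query reports that no disk of the current set contains $p$.

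In the successful case, the $n^{1/3}$ recorded disks are pairwise distinct (each is deleted before the next query) and all contain $p$, so I would simply return them; the deletions from $\mathsf{DNN}(\mathcal D)$ required by the ``delete them'' clause of the lemma have already been performed along the way. In the unsuccessful case, I should restore the state before reporting failure, so that $\mathcal D$ and $\mathsf{DNN}(\mathcal D)$ are unchanged whenever fewer than $n^{1/3}$ disks contain $p$; this is done by re-inserting each of the at most $n^{1/3}-1$ recorded disks back into $\mathsf{DNN}(\mathcal D)$, using its insertion routine.

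For the running time, every iteration costs one query in $O(\log^2 n)$ time and at most one deletion in $O(\log^4 n)$ time, and the optional restoration consists of at most $n^{1/3}$ insertions in $O(\log^2 n)$ time each. Summing over at most $n^{1/3}$ iterations, the total cost is $O(n^{1/3}\log^4 n)$, dominated by the deletion cost, which matches the claimed bound. I do not expect any real obstacle here: correctness follows because each successive query, operating on the set from which all previously found disks have been removed, must either return a \emph{new} disk of $\mathcal D$ containing $p$ or certify that no such disk remains; the time bound is immediate from the operation costs for $\mathsf{DNN}$ quoted just before the lemma.
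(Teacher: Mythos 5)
Your proposal is correct and follows essentially the same argument as the paper: repeatedly query $\mathsf{DNN}(\mathcal D)$ for a disk containing $p$, delete it, stop after $n^{1/3}$ successes or upon a failed query, and re-insert the removed disks in the unsuccessful case, with the $O(n^{1/3}\log^4 n)$ bound dominated by the deletion cost. No differences worth noting.
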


\begin{proof}
	We find a disk containing $p$ in $\mathcal{D}$ using $\mathsf{DNN}(\mathcal D)$, and 
	remove the returned disk from $\mathcal D$ and $\mathsf{DNN}(\mathcal D)$. 
	Note that $\mathsf{DNN}(\mathcal D)$ returns a disk if and only if there is a disk in $\mathcal{D}$ that contains the query point. 
	We repeat this $n^{1/3}$ times. 
	If $n^{1/3}$ distinct disks are returned, then we are done.
	Otherwise, there are less than $n^{1/3}$ disks that contain $p$. 
	In this case, we are required to insert all removed points to $\mathcal{D}$ and $\mathsf{DNN}(\mathcal D)$. 
	This procedure applies $O(n^{1/3})$ queries, insertions and deletions, so it takes $O(n^{1/3}\log^4 n)$ time in total.
\end{proof}




Let $\mathcal{D}$ be the set of disks induced by $P$, and we construct $\mathsf{DNN}(\mathcal D)$. 
We choose the smallest disk $D_p$ of $\mathcal D$ and remove $D_p$ from $\mathcal{D}$. Then we update $\mathsf{DNN}(\mathcal D)$
accordingly. We check if there are $n^{1/3}$ disks of $\mathcal{D}$ containing the center $p$ 
of $D_p$ by applying the algorithm in Lemma~\ref{chain:1}. If it returns $n^{1/3}$ disks,
let $L_p$ be the set consisting of $p$ and the centers of those disks. 
Since $\mathcal{D}$ is updated, we can apply this procedure again. We do this until $\mathcal{D}$ is empty. 
As a result of this repetition, we obtain sets $L_p$'s of points of $P$. Note that the disks induced by $L_p$
contain $p$, and the number of $L_p$'s is $O(n^{2/3})$. 

Next, for each set $L_p$, we consider six interior-disjoint cones with opening angle $\pi/3$ with apex $p$. 
For each cone $F$, we sort the points of $L_p\cap F$ in the ascending order of their associated radii.
Then we claim that the sorted list is a chain, and thus we obtain six chains for each set $L_p$. 
Therefore, we have $O(n^{2/3})$ chains in total. 

\begin{lemma}\label{lem:chain}
 The sequence of the points of $L_p\cap F$ sorted in the ascending order of their associated radii is a chain.
\end{lemma}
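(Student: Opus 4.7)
The goal is to verify the defining condition of a chain, namely that for every pair $p_i, p_j \in L_p\cap F$ with $r_{p_i}\le r_{p_j}$ one has $|p_ip_j|\le r_{p_j}$, which gives the edge $\direct{p_j}{p_i}$. I would first record the two properties that the construction hands us for free: every $q\in L_p$ was found by a nearest-neighbor query that reported $p\in D_q$, so $|pq|\le r_q$; and both $p_i$ and $p_j$ lie in the same cone $F$ of opening angle $\pi/3$ with apex $p$, so $\angle p_i p p_j\le \pi/3$.

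The plan is then to argue purely about the triangle $\triangle p\,p_i\,p_j$. The key geometric fact I would invoke is that in a triangle in which one angle is at most $\pi/3$, the side opposite that angle is at most the larger of the two adjacent sides; this follows immediately from the law of sines (the other two angles together are at least $2\pi/3$, so at least one of them is $\ge$ the angle at $p$, hence its opposite side is $\ge |p_ip_j|$), or alternatively one can read it off the law of cosines $|p_ip_j|^2 \le |pp_i|^2+|pp_j|^2-|pp_i|\,|pp_j|$ by splitting into the cases $|pp_i|\le |pp_j|$ and $|pp_i|\ge|pp_j|$. Applying this to $\triangle p\,p_i\,p_j$ with the angle at $p$, together with the bounds $|pp_i|\le r_{p_i}$ and $|pp_j|\le r_{p_j}$ from the first paragraph, yields
\[
|p_ip_j|\;\le\;\max\bigl(|pp_i|,|pp_j|\bigr)\;\le\;\max\bigl(r_{p_i},r_{p_j}\bigr)\;=\;r_{p_j},
\]
which is exactly the edge condition $\direct{p_j}{p_i}\in E$ required by the definition of a chain.

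Since the sequence is sorted in ascending order of radii, the above argument applied to every pair $i<j$ completes the proof. There is no real obstacle here: the only step that deserves a careful sentence is the triangle inequality lemma above, and everything else is just unpacking the way $L_p$ and $F$ were constructed.
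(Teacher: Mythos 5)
Your proof is correct and follows essentially the same route as the paper: both reduce the chain condition to the bound $|p_ip_j|\le\max(|pp_i|,|pp_j|)\le r_{p_j}$, using that $|pq|\le r_q$ for every $q\in L_p$ and that the two points span an angle at most $\pi/3$ at $p$. The only cosmetic difference is that you justify the $\max$ inequality via the law of cosines/sines, whereas the paper argues it by nesting the equilateral triangles $\triangle_x$ and $\triangle_y$ cut off by the cone; the conclusions are identical.
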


\begin{proof}
	Consider two points  $x$ and $y$ in the given sequence such that $x$ lies before $y$ in the sequence.  
	Note that $r_x \leq r_y$ by construction. 
	We show that $\direct{y}{x}\in E$. 
	Let $\triangle_y$ be the regular triangle surrounded by the two rays of $F$ and a line passing through $y$. 
	Then, $p$, one corner of $\triangle_y$, is one of the farthest points from $y$ within $\triangle_y$. 
	Thus, if $x$ is contained in $\triangle_y$, we have $r_y\geq |yp|$ and $r_y\geq |yx|$, and thus 
	$\direct{y}{x}\in E$. 
	Otherwise, $\triangle_y \subset \triangle_x$, and thus $\triangle_x$ contains $y$. 
	In this case, since $p$ is one of the farthest points from $x$ in $\triangle_x$, 
	$|yx|\leq |xp| \leq r_x \le r_y$, and thus $\direct{y}{x}\in E$. 
\end{proof}

Therefore, we have a set $\mathcal{C}$ of $O(n^{2/3})$ chains of length $O(n^{1/3})$.
We call the set of points of $P$ not contained in any of the chains of $\mathcal{C}$
the \emph{remaining set}. 
Also, we use  $\mathcal{R}$ to denote the subgraph of $G$ induced by $R$, and call it
the \emph{remaining graph}.

\begin{lemma} \label{4:chain,thick}
The graph $\mathcal R$ is $6n^{1/3}$-thick.
\end{lemma}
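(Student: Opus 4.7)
\medskip
\noindent\textbf{Proof proposal.}
My plan is to argue by contradiction: suppose some point $q$ in the plane is contained in more than $6n^{1/3}$ of the disks $D_x$ with $x\in R$. I will exhibit a point of $R$ that should have been swallowed into some chain $L_{\cdot}$, contradicting its membership in $R$.

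First I would subdivide the plane around $q$ into six closed cones of opening angle $\pi/3$ with apex $q$. By the pigeonhole principle, one of these cones, call it $F$, contains the centers of more than $n^{1/3}$ disks $D_x$ ($x\in R$) that cover $q$; let $T$ denote this set of centers, and let $p$ be the point of $T$ whose associated radius $r_p$ is smallest (breaking ties using the same rule the construction uses). The key geometric lemma I will prove is: for every other $x\in T$ we have $p\in D_x$. Concretely, since both $p$ and $x$ lie in a cone of opening angle $\pi/3$ at $q$, the angle $\angle pqx$ is at most $\pi/3$, so by the law of cosines
\[
|px|^2 \;\le\; |pq|^2+|xq|^2-|pq|\,|xq|.
\]
Setting $a=\min(|pq|,|xq|)$ and $b=\max(|pq|,|xq|)$, this gives $|px|^2\le a^2+b^2-ab\le b^2$ (using $a^2\le ab$), so $|px|\le\max(|pq|,|xq|)\le\max(r_p,r_x)=r_x$, since $r_p\le r_x$ and $q\in D_p\cap D_x$.

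Next I would examine the moment the algorithm processes $p$. The algorithm processes disks in ascending order of radius, so at the instant $p$ is picked, every $x\in T\setminus\{p\}$ is still present in $\mathcal D$ (each such $x$ has $r_x\ge r_p$, is itself in $R$, and therefore has not been placed into any chain before $p$ is processed). By the geometric claim above, each of these $|T|-1 \geq n^{1/3}$ remaining disks contains $p$, so the procedure of Lemma~\ref{chain:1} successfully reports $n^{1/3}$ disks containing $p$, forces $p$ into a set $L_p$, and finally places $p$ into a chain of $\mathcal C$. This contradicts $p\in R$, completing the argument.

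The main obstacle is the geometric lemma in step two: getting the opening angle $\pi/3$ to yield $|px|\le r_x$ without any assumption on which of $|pq|,|xq|$ is larger. The law-of-cosines calculation above handles this cleanly, so everything else is essentially bookkeeping about the order in which disks are processed and removed from $\mathcal D$.
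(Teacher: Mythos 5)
Your proof is correct and follows essentially the same route as the paper's: six cones of opening angle $\pi/3$ around the query point, the observation that among centers in one such cone whose disks cover the apex the smallest-radius one is contained in all the other disks, and the fact that such a point would have been absorbed into some $L_p$ when the algorithm processed it. The only differences are cosmetic --- you argue by contradiction and establish the containment via the law of cosines, whereas the paper phrases it as ``$R$ contains no chain of length $n^{1/3}$'' and reuses the equilateral-triangle argument of Lemma~\ref{lem:chain}; both yield the same bound.
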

\begin{proof}
We first claim that the remaining set $R$ does not have a $n^{1/3}$-length chain.
Assume to the contrary that there is a $n^{1/3}$-length chain $C$, and let $p$ be the first point in $C$. 
At some moment in the course of the algorithm, $D_p$ becomes the smallest disk of $\mathcal{D}$.
At this moment, all disks associated with the points in $C$ are contained in $\mathcal{D}$. That is, 
at least $n^{1/3}$ disks of $\mathcal{D}$ contain $p$, and thus $p$ must be contained in a chain of $\mathcal{C}$, which contradicts
that $p$ is a point of $R$. 

Then we show that $R$ is $6n^{1/3}$-thick.
For any point $x$ in the plane, we consider 
six interior-disjoint cones with opening angle $\pi/3$ with apex $x$. 
For a cone $F$, consider the list $L$ of the points $p$ of $R\cap F$ with $r_p \geq |px|$ sorted in the ascending order of their associated radii. 
The proof of Lemma~\ref{lem:chain} implies that $L$ is a chain. 
By the claim mentioned above, the size of $L$ is less than $n^{1/3}$. 
Now consider the union of the lists for all of the six cones, which has size less than $6n^{1/3}$. 
Notice that it is the set of all points $p\in P$ with $r_p \geq |px|$, and thus the lemma holds. 
%
%
\end{proof}

By Lemma~\ref{chain:1}, we can compute all $L_p$'s in $O(n^{4/3}\log^4 n)$ time, and for each $L_p$, we can compute
six chains in $O(n^{1/3}\log n)$ time. Since the number of $L_p$'s is $O(n^{2/3})$, the total time for computing all chains of $\mathcal{C}$
is $O(n^{4/3}\log^4n)$ time. 

\subsection{Separation tree of \texorpdfstring{$\mathcal{R}$}{R}}\label{sec:remain}
In this section, we build a reachability oracle for $\mathcal R$, which is similar to the reachability oracle proposed by Kaplan et al. \cite[Section 4.2]{kaplan_et_al:LIPIcs:2015:5106}. 
In this case, since $\mathcal R$ is $O(n^{1/3})$-thick, we can derive a better result.
Then Lemma~\ref{4:separator} shows that there is a separator of size $O(n^{2/3})$. 
Recall that $R$ is the vertex set of $\mathcal{R}$. 

\paragraph{Data structure.}
We construct the separation tree $T$ of $R$ recursively as follows.
We compute a separator $S_\textnormal{cross}$ of $\mathcal R$ and two subsets $\sinn$ and $\sout$ separated by $S_\textnormal{cross}$. 
We recursively construct the separation trees of $\sinn$ and $\sout$.
Then we make a new node $v$, and connect $v$ with the roots of the separation trees of $\sinn$ and $\sout$. 
We let $G_v$ denote the subgraph of $G$ induced by $R$. 
%
See Figure~\ref{sep-1}.

\begin{figure}
	\centering
	\includegraphics[width=0.5\textwidth]{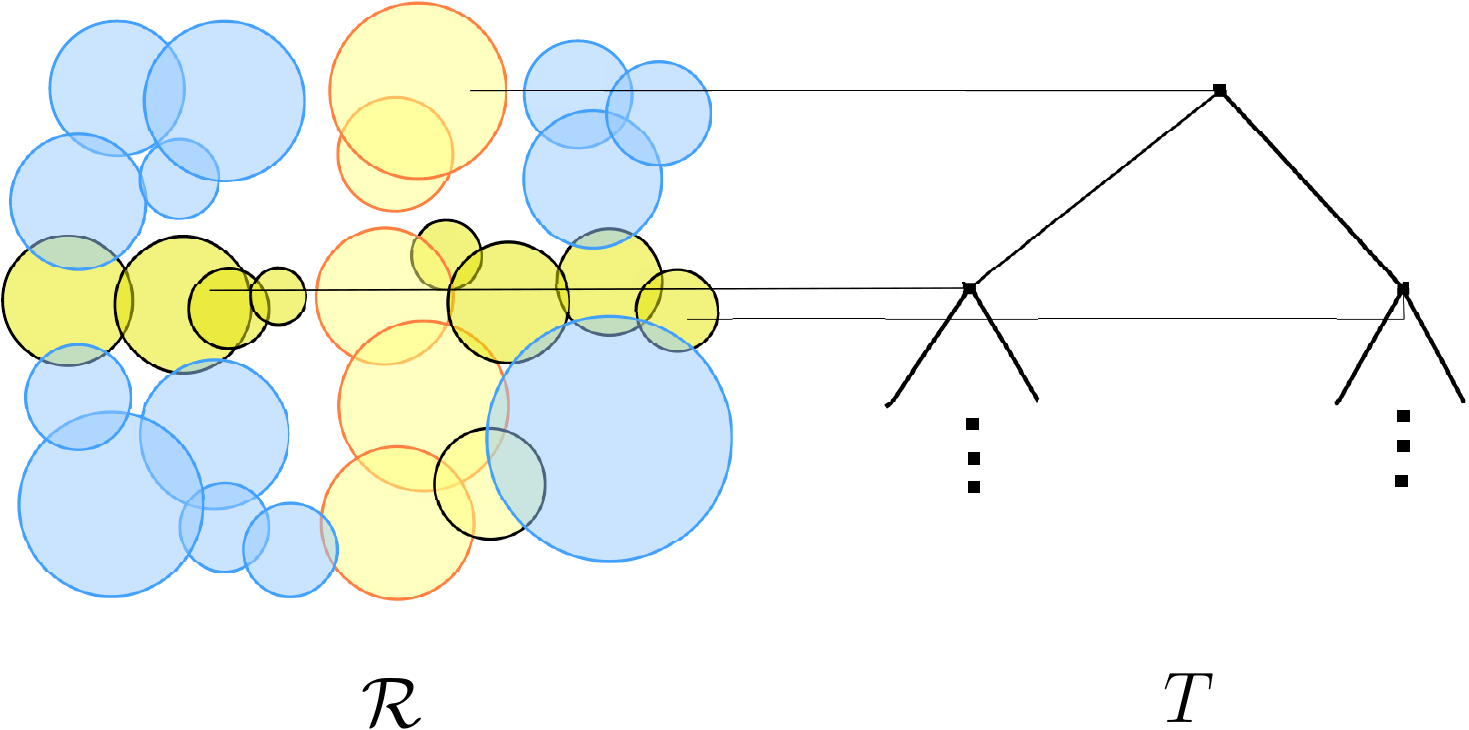}
	\hfil
	\caption{The remaining set and its separation tree}
	\label{sep-1}
\end{figure}

For each node $v$, we store the reachability information as follows: 
For every point $p\in G_v$, we store two lists of points of $S_\textnormal{cross}$ which is reachable to $p$ 
and which is reachable from $p$ within $G_v$. 
In particular, we construct a $2$-spanner of $G_v$. 
Then, for each point $s\in S_\textnormal{cross}$, we apply the BFS algorithm in Section~\ref{sec:bfs} from $s$. 
Also, we reverse the spanner and again apply the BFS algorithm from $s$.

\paragraph{Query Algorithm.}
Given two query points $p,q\in R$, we want to check if $q$ is reachable from $p$ in $\mathcal{R}$. 
To do this, we observe the following.
Let $v$ and $u$ be the two nodes of the separation tree $T$ such that the separators of $G_v$ and $G_u$ contain $p$ and $q$, respectively. 
They are uniquely defined because each point of $R$ is contained in exactly one separator stored in $T$. 
Let $L$ be the path of $T$ from the lowest common ancestor of $v$ and $u$ to the root. 
Consider a path $\pi$ from $p$ to $q$ in $\mathcal R$, if it exists. 
By construction, there is a node $w$ in $L$ such that the separator of $G_w$ 
intersects $\pi$. Among them, consider the node closest to the root node. 
Then $G_w$ contains $\pi$. 
Therefore, it suffices to check if $q$ is reachable from $p$ in $G_x$ for every node $x$ in $L$.

%
%

To use this observation, we first compute $v, u$ and $L$ in $O(\log n)$ time. Then
for each node $w$ of $L$, we check if there is a point $x$ in separator 
such that $p$ is reachable to $x$ and $q$ is reachable from $x$ in $O(m)$ time,
where $m$ denotes the size of the separator of $G_w$.
We return $\mathsf{YES}$ if and only if there is such a point $x$.
Since the size of the separators stored in each node is geometrically increasing along $L$, 
the total size is dominated by the size of the separator of $R$, which is $O(n^{2/3})$. 
Therefore, our query algorithm takes $O(n^{2/3})$ time.


\begin{lemma}\label{R:conclude}
We can construct a separation tree $T$ of $\mathcal R$ with associated reachability information in $O(n^{5/3})$ time and $O(n^{5/3})$ space. Then, we can query whether there is path from $p$ to $q$ in $\mathcal R$ within $O(n^{2/3})$ time.
\end{lemma}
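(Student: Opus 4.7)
The plan is to build the separation tree $T$ by recursive application of Lemma~\ref{4:separator}. At the root we apply the separator lemma to $\mathcal{R}$; since $\mathcal{R}$ is $O(n^{1/3})$-thick by Lemma~\ref{4:chain,thick}, the separator $S_{\text{cross}}$ has size $O(\sqrt{n^{1/3}\cdot n})=O(n^{2/3})$, and we recurse on $\sinn$ and $\sout$. Crucially, every subset of $\mathcal{R}$ is also $O(n^{1/3})$-thick, so at a recursive node $v$ with $n_v$ points the separator has size $O(n^{1/6}\sqrt{n_v})$. To attach reachability information at node $v$, I would build a $2$-spanner of $G_v$ via Theorem~\ref{theta:2}, then for each $s\in S_{\text{cross}}(v)$ run the BFS algorithm of Theorem~\ref{bfs:2} on the spanner and on its reversal, storing for each point of $G_v$ the set of separator points that can reach it and that it can reach within $G_v$.

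For the query algorithm, I would simply implement the observation already stated: locate the unique nodes $v,u$ whose separators contain $p,q$ (in $O(\log n)$ time after preprocessing pointers from each point of $R$ to the node at whose separator it sits), walk along the path $L$ from the lowest common ancestor to the root, and for every ancestor $w$ scan the separator $S_{\text{cross}}(w)$ to check whether some $x\in S_{\text{cross}}(w)$ is on the ``reachable-from-$p$'' list of $p$ and on the ``reaches-$q$'' list of $q$. Correctness follows from the fact that any path $\pi$ from $p$ to $q$ in $\mathcal{R}$ must exit the smallest $G_w$ that contains both endpoints through the separator at its root, so the highest $w\in L$ whose separator meets $\pi$ actually contains $\pi$ entirely.

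The main obstacle is the space and time analysis, and it is where the $O(n^{1/3})$-thickness of $\mathcal{R}$ must be used repeatedly. At node $v$ we store $O(n_v)$ reachability lists of length $O(n^{1/6}\sqrt{n_v})$ each, for a node contribution of $O(n^{1/6}n_v^{3/2})$, and the BFS construction costs the same up to logarithmic factors. Summing over the recursion, I would solve the recurrence $W(n_v)=c\,n_v^{3/2}+W(n_1)+W(n_2)$ under the constraints $n_1+n_2\le n_v$ and $n_i\le 2n_v/3$; since $x^{3/2}$ is convex the worst split is the most unbalanced allowed one, which yields $n_1^{3/2}+n_2^{3/2}\le ((2/3)^{3/2}+(1/3)^{3/2})n_v^{3/2}<n_v^{3/2}$, and hence $W(n_v)=O(n_v^{3/2})$. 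Multiplying by the $n^{1/6}$ factor from the thickness gives $O(n^{5/3})$ total space and preprocessing time.

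For the query bound, the crucial point is that the separator sizes encountered along $L$ are geometrically decreasing: at depth $d$ below the root, the containing subproblem has $n_v=O((2/3)^d n)$ points, so its separator has size $O(n^{1/6}(2/3)^{d/2}\sqrt{n})=O((2/3)^{d/2}n^{2/3})$. Summing this geometric series along the $O(\log n)$-length path $L$ gives total query cost $O(n^{2/3})$, as claimed. Combining these estimates with the correctness argument above completes the proof of Lemma~\ref{R:conclude}.
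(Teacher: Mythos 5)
Your proposal is correct and follows essentially the same route as the paper: recursive separators via Lemma~\ref{4:separator}, a $2$-spanner plus BFS from each separator vertex at every node, a convexity argument for the $O(n^{5/3})$ recurrence, and geometrically decreasing separator sizes along $L$ for the $O(n^{2/3})$ query bound. If anything, you are slightly more careful than the paper's own recurrence, since you correctly use the global thickness bound to get separator size $O(n^{1/6}\sqrt{n_v})$ at a node with $n_v$ points rather than $O(n_v^{2/3})$.
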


\begin{proof}
	Since the analysis of the query time is presented in the above text, we focus on the size of the data structure and
	its preprocessing time only. 
	For each node $v$ of the separation tree, we spend $O(m)$ time to compute a separator and two separated subsets, $O(m \log^3 m)$ time to compute a 2-spanner, and $O(m^{2/3})\cdot O(m)$ = $O(m^{5/3})$ time for the BFS algorithm,
	where $m$ denotes the complexity of $G_v$. 
	
	Let $P(m)$ be the time for constructing the separation tree for a point set of size $m$.
	Then we have $P(m) \leq P(m_1)+ P(m_2) + O(m^{5/3})$,
	where $m_1$ and $m_2$ denote the size of $\sinn$ and $\sout$, respectively. 
	Notice that $m_1+m_2 \leq m$ and $m_1, m_2 < 2m/3$, and thus $P(n) = O(n^{5/3})$. 
	Similarly, we can show that the space complexity is $O(n^{5/3})$.
\end{proof}
%
%
%

\subsection{Chain Indices}\label{sec:chain}
In this section, we construct a reachability oracle for each chain $C\in\mathcal{C}$: 
Given any two points $p$ and $q$ in $P$, we can check if there is a path from $p$ to $q$ intersects $C$. 
For each chain $C=\langle p_1,...,p_t\rangle$, we can construct the oracle in $O(n)$ time once we have a 2-spanner of $G$. 
To do this, we need the following lemma. See Figure~\ref{index-1}.

\begin{lemma}\label{lem:chain-index}
For two points $p$ and $q$ in $P$, 
let $i$ be the largest index such that $p$ is reachable to $p_i$, and $j$ be the smallest index such that $p_j$ is reachable to  $q$. Then, there is a $p$-$q$ path that intersects $C$
if and only if $j\leq i$. 
\end{lemma}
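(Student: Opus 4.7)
The plan is to prove both directions separately, using the key structural fact that the chain $C=\langle p_1,\ldots,p_t\rangle$ is totally ordered by reachability: because the chain is sorted in ascending order of radii and by definition of a chain we have $\direct{p_b}{p_a}\in E$ whenever $a<b$, every $p_b$ reaches every $p_a$ with $a\le b$ via a single edge (in fact, directly). This is the only property of chains I will use.

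For the ``if'' direction, suppose $j\le i$. By definition of $i$, there is a $p$-$p_i$ path in $G$. Since $j\le i$, the edge $\direct{p_i}{p_j}$ lies in $E$, so $p_i$ reaches $p_j$. By definition of $j$, there is a $p_j$-$q$ path in $G$. Concatenating these three pieces gives a $p$-$q$ path that visits $p_i,p_j\in C$, so it intersects $C$.

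For the ``only if'' direction, suppose there is a $p$-$q$ path $\pi$ intersecting $C$, and let $p_k$ be any vertex of $C$ on $\pi$. The prefix of $\pi$ from $p$ to $p_k$ witnesses that $p$ reaches $p_k$, so by maximality of $i$ we have $k\le i$. The suffix from $p_k$ to $q$ witnesses that $p_k$ reaches $q$, so by minimality of $j$ we have $j\le k$. Therefore $j\le k\le i$.

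There is essentially no obstacle here: the lemma is a clean consequence of the ``downward-closed reachability'' inside a chain together with the extremality of $i$ and $j$. The only thing to double-check is that $i$ and $j$ are well-defined when the claimed path exists (they are, by the argument above: the intersection point $p_k$ certifies that at least one such index exists on each side).
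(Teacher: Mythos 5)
Your proof is correct and follows essentially the same route as the paper's: the ``only if'' direction uses the extremality of $i$ and $j$ at any chain vertex $p_k$ on the path, and the ``if'' direction concatenates a $p$-$p_i$ path, the chain edge $\direct{p_i}{p_j}$, and a $p_j$-$q$ path. You merely make explicit the chain property that the paper leaves implicit when it asserts the existence of a $p_i$-$p_j$ path.
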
 

\begin{proof}
	If there is $p$-$q$ path that intersects $C$, 
	there exists $p_k\in C$ such that $p$ is reachable to $p_k$ and $p_k$ is reachable to $q$. Then, $j\leq k\leq i$.
	Conversely, if $j\leq i$, there exist a $p$-$p_i$ path, a $p_i$-$p_j$ path, and a $p_j$-$q$ path by the choice of $i$ and $j$. 
	Then, the concatenation of those three paths is a $p$-$q$ path that intersects $C$.
\end{proof}

\begin{figure}
\centering
   \includegraphics[width=0.6\textwidth]{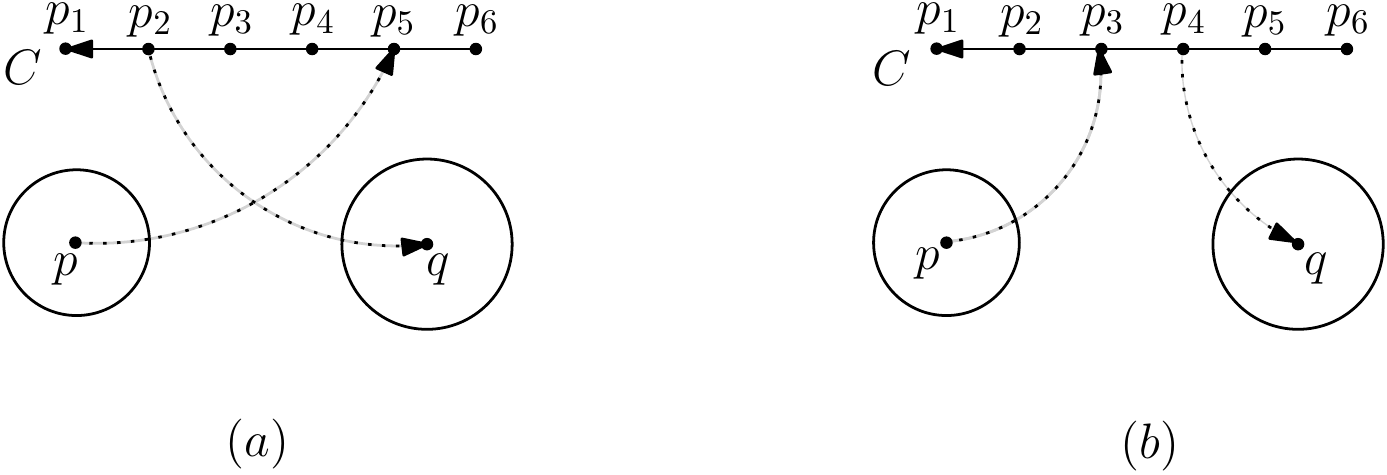}
   \hfil
\caption{(a) There is a $p$-$q$ path via $C$ if $j(q)\leq i(p)$.  (b) 
	There is no $p$-$q$ path that intersects $C$ if $j(q)>i(p)$}
\label{index-1}
\end{figure}

For every point $q\in P$,  
we store the largest index $i(q)$ such that $q$ is reachable to $p_{i(q)}$, and store the smallest index $j(q)$ 
such that $p_{j(q)}$ is reachable to  $q$. 

\begin{lemma}\label{lem:const-index}
	We can compute the indices $i(\cdot)$ and $j(\cdot)$ for every $q\in P$ and every $C\in\mathcal C$
	in $O(n^{5/3})$ time. Also, the total number of indices we store is $O(n^{5/3})$.
\end{lemma}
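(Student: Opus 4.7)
The plan is to compute $i(\cdot)$ and $j(\cdot)$ by reducing all reachability queries within the transmission graph $G$ to reachability in the condensation DAG of a $2$-spanner $H$ of $G$, and then running a simple dynamic-programming pass over that DAG once per chain.

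First, I would build the $2$-spanner $H$ of $G$ in $O(n\log^3 n)$ time using Theorem~\ref{theta:2}, and compute in $O(n)$ additional time the strongly connected components of $H$ together with their condensation DAG $D=(V_D,E_D)$, where $|V_D|+|E_D|=O(n)$ since $|E(H)|=O(n)$. Because $H$ is a $t$-spanner it preserves reachability exactly, so for any two points $p,q\in P$, reachability from $p$ to $q$ in $G$ is equivalent to reachability from $\mathrm{SCC}(p)$ to $\mathrm{SCC}(q)$ in $D$. In particular, the indices $i(q)$ and $j(q)$ for a chain $C=\langle p_1,\ldots,p_t\rangle$ depend only on $\mathrm{SCC}(q)$.

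Next, fix a chain $C$ and compute $i_C(\cdot)$ by a single reverse-topological DP on $D$. For each SCC $v$ let
\[
I_C(v)=\max\bigl(\{\,j:p_j\in C,\ \mathrm{SCC}(p_j)=v\,\}\cup\{\,I_C(u):(v,u)\in E_D\,\}\bigr),
\]
with the convention $\max\emptyset=-\infty$ meaning ``no reachable point of $C$''. Processing the SCCs in reverse topological order, each $I_C(v)$ is computed in time proportional to its out-degree in $D$, so the whole pass costs $O(|V_D|+|E_D|)=O(n)$ per chain. Then $i(q)=I_C(\mathrm{SCC}(q))$ for every $q\in P$. The values $j(q)$ are computed symmetrically: run the same DP on the condensation of the reverse spanner, taking the minimum index instead of the maximum; by Lemma~\ref{lem:chain-index} and the fact that the set of points of $C$ reaching $q$ forms a suffix (since any larger-index point in $C$ reaches any smaller-index one), this correctly yields $j(q)$.

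For the totals, there are $|\mathcal{C}|=O(n^{2/3})$ chains and the per-chain work is $O(n)$, giving $O(n)\cdot O(n^{2/3})=O(n^{5/3})$ time to compute all indices, plus the one-time $O(n\log^3 n)$ spent on $H$ and its condensation. The storage is $O(n)$ indices per chain for $O(n^{5/3})$ indices overall, matching the claimed space bound. The main obstacle I anticipate is only notational: verifying that reachability through a chain is accurately captured by the monotone structure asserted in Lemma~\ref{lem:chain-index}, and that the forward/reverse DAG DPs exactly realize $i$ and $j$; once this is stated carefully, the complexity accounting is immediate.
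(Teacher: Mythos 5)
Your proof is correct, but it takes a genuinely different route from the paper's. The paper also starts from the $2$-spanner, but instead of condensing it into a DAG it computes, for a fixed chain $C=\langle p_1,\ldots,p_t\rangle$, the sets $I_\ell=\{q : j(q)=\ell\}$ by a peeling process: it runs a BFS from $p_1$ in the spanner, records the reached points as $I_1$, deletes them and their incident edges, then runs a BFS from $p_2$ in the residual graph, and so on. The total cost is $O(n)$ per chain because the $I_\ell$ are pairwise disjoint, and the correctness of the peeling rests on the (slightly subtle) observation that every spanner path from $p_\ell$ to a point $q$ with $j(q)=\ell$ consists entirely of points with $j(\cdot)=\ell$, so deleting the earlier layers never disconnects anything that still needs to be found. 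Your SCC-condensation plus topological dynamic program sidesteps that subtlety: the condensation is computed once, each chain is handled by a single linear pass over a static DAG, and correctness is immediate from the DP recurrence. Both methods spend $O(n)$ per chain and hence meet the claimed $O(n^{5/3})$ time and storage bounds; yours is arguably the more robust formulation, while the paper's avoids introducing the condensation machinery. One minor remark: the suffix structure of $\{\,j : p_j \text{ reaches } q\,\}$ that you invoke via Lemma~\ref{lem:chain-index} is not actually needed for your argument --- your reverse DP computes the minimum index of a chain point reaching $q$ directly, which is $j(q)$ by definition.
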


\begin{proof}
	In the following, we show how to compute $j(q)$ for every point $q\in P$. The other index $i(q)$ can be computed similarly.
	Let $I_j$ be the set of points $q\in P$ such that $j(q)=j$. 
	That is, a point $q$ is contained in $I_j$ if and only if $p_i$ is the first point of $C$ which is reachable to $q$. 

	We compute $I_1, \ldots, I_t$ one by one in order. 
	For an index $\ell$, assume that we maintain a graph $H_\ell$, which is the subgraph of the 2-spanner of $G$ 
	induced by $\cup_{j=\ell}^{t}I_j$. 
	We can compute $I_\ell$ by applying the BFS algorithm on $H_\ell$ starting from $p_{\ell}$.
	The set of all points we encountered is exactly $I_\ell$. 
	To maintain the invariant, we remove all points in $I_\ell$ and their adjacent 
	edges from $H_\ell$, and denote it by $H_{\ell+1}$. 
	For each index $\ell$, the BFS algorithm runs in $O(|I_\ell|)$ time. 
	Since $I_\ell$'s are pairwise disjoint, the BFS algorithm runs in $O(n)$ time in total. 
	Since we have $O(n^{2/3})$ chains, the total time for computing all chain indices is $O(n^{5/3})$. 
\end{proof}

\subsection{Reachability Oracles}

Given two points $p,q\in P$, we can check if $p$ is reachable from $q$ as follows.
Suppose there is a $p$-$q$ path $\pi$. 
If there is a chain $C$ that intersects $\pi$ at a point of $C$, say $p_k$. 
Then $j(q)\leq k\leq i(p)$ for the indices $j(q)$ and $i(p)$ stored in $C$
by Lemma~\ref{lem:chain-index}. 
In this case, we can find such a chain $C$ in $O(n^{2/3})$ time by computing indices $j(q)$ and $i(p)$ for all chains of $\mathcal C$. 
Otherwise, no chain of $\mathcal{C}$ intersects $\pi$.  
Then $\pi$ is contained in $\mathcal R$, and thus we can use the reachability oracle for $\mathcal R$ described in Section~\ref{sec:remain}.
This takes $O(n^{2/3})$ by Lemma~\ref{R:conclude}.  
%
%



\begin{theorem} \label{4:summary}
Given a set $P$ of points associated with radii, we can compute a reachability oracle for the transmission graph of $P$
in $O(n^{5/3})$ time. The reachability oracle has size $O(n^{5/3})$ and supports the query time $O(n^{2/3})$.
\end{theorem}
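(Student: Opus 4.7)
The plan is to combine the two ingredients developed in Sections~\ref{sec:remain} and~\ref{sec:chain} into a single oracle, and to verify that the resulting bounds match the theorem. The oracle stores: (i) the chain family $\mathcal{C}$ and the remaining set $R$ produced by the procedure of Section~\ref{sec:chain}, together with a bit per point indicating membership in $R$; (ii) the separation tree $T$ of the remaining graph $\mathcal{R}$ with the per-node reachability information given by Lemma~\ref{R:conclude}; and (iii) for every chain $C\in\mathcal{C}$ and every $q\in P$, the indices $i(q)$ and $j(q)$ as defined in Section~\ref{sec:chain}.

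For preprocessing, I would run the three construction phases in sequence. First, compute $\mathcal{C}$ and $R$ in $O(n^{4/3}\log^4 n)$ time via the $\mathsf{DNN}$-based procedure of Section~\ref{sec:chain}. Second, construct $T$ together with its stored reachability information in $O(n^{5/3})$ time and space by Lemma~\ref{R:conclude}; this phase internally invokes Theorem~\ref{theta:2} to build a $2$-spanner at each separator node. Third, compute all chain indices in $O(n^{5/3})$ time via Lemma~\ref{lem:const-index}, using a global $2$-spanner of $G$ (again from Theorem~\ref{theta:2}) so that the BFS work across all $O(n^{2/3})$ chains sums to $O(n^{5/3})$. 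The preprocessing time and space are therefore both $O(n^{5/3})$.

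For a query on $p,q\in P$, the algorithm mirrors the sketch already given in the text. First, iterate over the $O(n^{2/3})$ chains $C\in\mathcal{C}$ and test whether $j(q)\le i(p)$; by Lemma~\ref{lem:chain-index}, success for some $C$ is equivalent to the existence of a $p$-$q$ path that visits $C$, in which case return \textsf{YES}. Otherwise, if both $p$ and $q$ lie in $R$, query $T$ in $O(n^{2/3})$ time by Lemma~\ref{R:conclude} and return its answer; if either endpoint lies in a chain, return \textsf{NO}, because any $p$-$q$ path would then intersect that chain and would already have been detected. Both phases cost $O(n^{2/3})$.

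Correctness reduces to the disjoint partition $P = R \sqcup \bigsqcup_{C\in\mathcal{C}} C$: any $p$-$q$ path either visits some chain (captured by the index test through Lemma~\ref{lem:chain-index}) or stays entirely inside $R$ (captured by the separator oracle through Lemma~\ref{R:conclude}). No new geometric argument is required, so the only real obstacle is the preprocessing bookkeeping — in particular checking that none of the three phases exceeds $O(n^{5/3})$. The least obvious term is the separation-tree build, which rests on the recurrence $P(m)\le P(m_1)+P(m_2)+O(m^{5/3})$ with $m_1,m_2<2m/3$ already solved in Lemma~\ref{R:conclude}; once this is accepted, the theorem follows by direct summation of the preprocessing, space, and query bounds.
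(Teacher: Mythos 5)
Your proposal is correct and follows essentially the same route as the paper: the query tests the chain indices $i(p),j(q)$ over all $O(n^{2/3})$ chains via Lemma~\ref{lem:chain-index} and falls back to the separation-tree oracle of Lemma~\ref{R:conclude} for paths confined to $\mathcal{R}$, with preprocessing dominated by the $O(n^{5/3})$ bounds of Lemmas~\ref{R:conclude} and~\ref{lem:const-index}. Your explicit handling of the case where an endpoint lies in a chain (any such path is caught by that chain's index test) is a small clarification the paper leaves implicit, but it does not change the argument.
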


\section{Reachability Oracle for Small Radius Ratio}\label{sec:psioracle}
We improve the reachability of oracle described in Section~\ref{sec:oracle} if  $\Psi$ is polynomial in $n$.
In particular, we construct a reachability oracle with size $O(n^{3/2}\log^{1/2}\Psi)$, query time $O(n^{1/2} \log^{1/2} \Psi)$, and preprocessing time $O(n^{3/2}\log^{1/2}\Psi)$ time. 

\subsection{Hierarchical Grid}
For an index $i\in\{0,1,...\}$, consider the partition of the plane into axis-parallel squares (cells) with diameter $2^i$
such that the origin lies in the corner of a cell.
We call this the \emph{grid} at level $i$, and denote it by $\mathcal Q_i$. 
We consider the $L$ grids $\mathcal Q_0,...,\mathcal Q_L$ where $L$ = $\lceil \log{\Psi} \rceil$.
For each cell $\sigma \in \mathcal Q_i$, let $P_\sigma$ be the set of points $p$ in $P \cap \sigma$ such that $r_p \in [2^i,2^{i+1})$.
Note that every point is contained in $P_\sigma$ for exactly one cell $\sigma$ for all grids $\mathcal Q_1,\ldots, \mathcal Q_L$. 

We construct a new graph $H=(V',E')$ where $V'$ is the set of cells $\sigma$ with $P_\sigma \neq \emptyset$, and 
$E'$ is the set of pairs $(\sigma_1, \sigma_2)$ such that 
there are two points $p\in P_{\sigma_1}$ and $q\in P_{\sigma_2}$ with $\direct{p}{q}\in E$. 
Note that the points in $P_\sigma$ form a clique, and thus 
it suffices to construct a reachability oracle for $H$. 

\begin{lemma} \label{lem:H}
	We can construct $H=(V',E')$ from the transmission graph $G$ within $O(n\log n \log \Psi)$ time.
	Moreover, the number of edges of $H$ is $O(|V'|\log\Psi)$.
\end{lemma}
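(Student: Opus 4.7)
The plan is to prove the two assertions separately: the edge-count bound $|E'| = O(|V'| \log \Psi)$ by a level-based charging argument, and the construction time $O(n \log n \log \Psi)$ by a point-level algorithm that uses one static disk-membership structure per cell.

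For the edge bound, note that every edge $(\sigma_1, \sigma_2) \in E'$ has a witness $p \in P_{\sigma_1}$, $q \in P_{\sigma_2}$ with $|pq| \leq r_p < 2^{i_1+1}$, where $i_a$ denotes the level of $\sigma_a$; hence the two cells lie at Euclidean distance at most $2^{\max(i_1,i_2)+1}$. The key geometric observation is that for any cell $\sigma$ and any level $j$, the number of level-$j$ cells within distance $O(2^j)$ of $\sigma$ is $O(1)$, since such cells have diameter $2^j$ and tile the plane. I would charge each edge to its lower-level endpoint (ties broken arbitrarily). For a fixed cell $\sigma$ at level $i$ and each $j \geq i$, the observation produces $O(1)$ candidate partners at level $j$, so $\sigma$ is charged at most $O(\log \Psi)$ edges overall. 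Summing yields $|E'| = O(|V'|\log\Psi)$.

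For the construction, first place each $p \in P$ in its cell $\sigma(p)$ by hashing on the pair (level $i(p)$, cell coordinates), which costs $O(n)$ and yields $V'$ and the partition $\{P_\sigma\}$. Then, for every nonempty $\sigma \in V'$, build a disk-membership structure on $\{D_p : p \in P_\sigma\}$ --- for instance the power diagram equipped with planar point location --- that decides in $O(\log |P_\sigma|)$ time whether a query point $q$ lies in some $D_p$ with $p \in P_\sigma$. This preprocessing totals $\sum_\sigma O(|P_\sigma|\log|P_\sigma|) = O(n\log n)$. To enumerate edges, I would loop over each $q \in P$ and each level $i \in \{0, 1, \ldots, L\}$, enumerate the $O(1)$ level-$i$ cells within distance $2^{i+1}$ of $q$ via the hash table (the only candidates for a source $\sigma_1$ of an edge $(\sigma_1, \sigma(q))$ with $i_1 = i$), and query each nonempty candidate's disk-membership structure with $q$; on a positive answer I insert the edge into $E'$, which is stored as a hash set to suppress duplicates. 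Any edge $(\sigma_1, \sigma_2)$ is discovered when one of its witness points $q$ is processed at level $i = i_1$, so the algorithm is correct. The work per $(q, i)$ pair is $O(\log n)$ and there are $n(L+1)$ such pairs, giving the claimed $O(n \log n \log \Psi)$ bound.

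The main subtlety is the charging argument. A naive ``always charge to the source'' strategy fails: a source cell at a high level can reach $\Omega(4^{i_1 - j})$ cells at a much lower level $j$, inflating the count. The remedy is to charge every edge to the lower-level endpoint regardless of orientation; the distance budget $2^{i_1+1}$ is set by the source, but the partner cell, lying at a weakly higher level, is one of only $O(1)$ candidates per level because higher-level cells are large enough to pack loosely within the budget. Once this combinatorial bound is secured, the construction algorithm falls out essentially as its dual: iterate over points, pay $O(\log n)$ per candidate cell, and multiply by $O(\log \Psi)$ levels.
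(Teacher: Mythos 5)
Your proposal is correct and matches the paper's proof in all essentials: the same per-cell power-diagram (disk-membership) structures queried once for each point at each of the $O(\log\Psi)$ levels over an $O(1)$-size cluster of candidate cells, and the same edge bound obtained by observing that the lower-level endpoint of an edge has only $O(1)$ possible higher-level partners at each level because the distance budget is commensurate with the partner's cell size. The only cosmetic difference is in the bookkeeping for the edge count: the paper first reduces to edges whose source lies at the weakly higher level (noting that otherwise the reverse edge is also in $E'$) and then counts incoming edges per cell, whereas you charge every edge directly to its lower-level endpoint regardless of orientation; both reduce to the same packing argument.
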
 
\begin{proof}
	For every cell $\sigma$ with $P_\sigma \neq\emptyset$, we construct the power diagram of $P_\sigma$. 
	This takes $O(n\log n)$ time for all cells in total. 
	We define a $\emph{grid cluster}$ as a block of $9\times9$ contiguous grid cells.
	For every point $p\in P_\sigma$ and for every grid level $i=0,...,L$, 
	let $C(p,i)$ denote the grid cluster of grid level $i$ 
	whose center grid cell contains $p$. 
	
	For each index $i\in[1, L]$ and each cell $\sigma$, we want to find all edges $\direct{\sigma}{\sigma'}$ such that $\sigma'$ is a cell of $\mathcal Q_i$. 
	For a cell $\sigma'$ of $\mathcal Q_i$, $\direct{\sigma}{\sigma'}\in E'$ if and only if  
	$\direct{p}{q}\in E$ for two points $p\in \sigma$ and $q\in \sigma'$.
	Then, the Euclidean distance between $p$ and the center of $\sigma'$ is at most $2^i+2^{i+1}=3\cdot2^i$.
	Therefore, $\sigma'\in C(p,i)$.
	Thus, we can compute 
	all edges $\direct{\sigma}{\sigma'}$ such that $\sigma'$ is a cell of $\mathcal Q_i$ 
	by considering the cells in $C(p,i)$ for every $p\in P_\sigma$. 
	In particular, for every cell $\sigma'$ in $C(p,i)$, 
	we check if there is a point $q\in P_{\sigma'}$ with $\direct{q}{p}\in E$ using the power diagram 
	within $O(|P_\sigma|\log n)$ time.
	Since $C(p,i)$ has $O(1)$ cells, we can do this for all cells in $C(p,i)$ in  $O(|P_\sigma|\log n )$ time.
	In total, we can compute every edge in $E'$ in $O(n\log n \log \Psi)$ time since $L=\log\Psi$ and the total size
	of $P_\sigma$ for all levels $i$ and all cells in $\mathcal Q_i$ is $n$.

	We want to compute the number of edges in $E'$. To do this, we first consider
	$(\sigma, \sigma')\in E'$ such that $\sigma\in \mathcal Q_i$ and $\sigma\in \mathcal Q_j$ with $j<i$.
	Then, there are two points $p\in P_\sigma$ and $q\in P_{\sigma'}$ such that $\direct{p}{q}\in E$.
	Since $r_q>r_p$ by construction, $\direct{q}{p}\in E$, and thus $(\sigma',\sigma)\in E$.
	Therefore, it suffices to compute the number of edges in $E'$ from $\sigma'\in \mathcal Q_j$ to $\sigma\in \mathcal Q_i$ where $i\leq j$.

	For a cell $\sigma\in \mathcal Q_i$, we show that the number of incoming edges from cells in $\mathcal Q_j$ to $\sigma$ is $O(1)$ for every $j\geq i$.
	Let $\sigma_j$ be a cell in $\mathcal Q_j$ such that $\sigma\subseteq \sigma_j$.
	Then, for every $p\in P_\sigma$, $C(p,i)$ is equal to the grid cluster with center cell $\sigma_j$. 
	Therefore, all incoming edges for $\sigma$ from the cells in $\mathcal Q_j$ are contained in the grid cluster with center cell $\sigma_j$.
	Then, the total number of incoming edges from the cells in $\cup_{k=i}^L \mathcal Q_k$ to $\sigma$ is $O(\log\Psi)$. 
	This shows that the number of edges of $H$ is $O(|V'|\log\Psi)$.
\end{proof}

\subsection{Separation tree revisited}
Now we want construct a reachability oracle for $H$. 
We construct the separation tree described in Section \ref{sec:remain}.
We slightly make the additional step because now the graph $H$ consists of cells instead of points with associated radii.
For each cell $\sigma\in V'\cap \mathcal Q_i$ with its center $c(\sigma)$, we associate radius $3\cdot2^i$ with $c(\sigma)$ and let $D(\sigma)$ be the associated disk.
We denote the set of all centers of cells of $V'$ by $X$.

We first show that $X$ is $O(\log\Psi)$-thick. That is, any point $p$ in the plane is contained in $O(\log\Psi)$ associated disks of 
the points of $X$.  
For a grid level $i$ and a cell $\sigma \in \mathcal Q_i$, consider the associated disk of $c(\sigma)$ containing $p$.
Then, $\sigma$ is contained in
 the block of $9\times9$ contiguous grid cells 
of level $i$ whose center grid cell contains $p$. 
Therefore, $p$ is contained in $O(1)$ associated disks of $c(\sigma)$ for cells of grid level $i$. 
Since the number of grids is $L=O(\log\Psi)$, $X$ is $O(\log\Psi)$-thick.

\paragraph{Data structure.}
We construct the separation tree $T$ for $X$ as follows.
We compute a separator $S_\textnormal{cross}$ of $X$ and two subsets $\sinn$ and $\sout$ separated by $S_\textnormal{cross}$. 
We recursively construct the separation trees of $\sinn$ and $\sout$.
Then we make a new node $v$, and connect $v$ with the roots of the separation trees of $\sinn$ and $\sout$. 
Let $H_v$ be the subgraph of $H$ induced by $X$. 

For each node $v$ of $T$, we store the reachability information as follows.
For every point $c(\sigma)\in S_\textnormal{cross}$, we store two lists of cells of $H_v$ which are reachable to $\sigma$ 
and which are reachable from $\sigma$ within $H_v$.
In particular, for each cell $\sigma$ where $c(\sigma)\in S_\textnormal{cross}$, we apply a breadth-first search from $\sigma$ in $H_v$. 
Also, we reverse $H_v$ and again apply  a breadth-first search from $\sigma$. 

\paragraph{Construction time.}
For each node $v$ of $T$, we spend $O(m)$ time to compute a separator and two separated subsets, where $m$ denotes the vertices of
 $H_v$.  The size of the separator is $O(m^{1/2}\log^{1/2}\Psi)$ because $X$ is $O(\log\Psi)$-thick.
Moreover, the number of edges of $H_v$ is $O(m\log\Psi)$ by Lemma~\ref{lem:H}.  
Thus we can apply a breath-first search in $O(m^{3/2}\log^{3/2}\Psi)$ time. 

Let $P(m)$ be the time for constructing the separation tree for a point set of size $m$.
Then we have $P(m) \leq P(m_1)+ P(m_2) + O(m^{3/2}\log^{3/2}\Psi)$,
where $m_1$ and $m_2$ denote the size of $\sinn$ and $\sout$, respectively. 
Notice that $m_1+m_2 \leq m$ and $m_1, m_2 < 2m/3$, and thus $P(n) = O(m^{3/2}\log^{3/2}\Psi)$. 
Similarly, 
we can show that the space complexity is $O(m^{3/2}\log^{1/2}\Psi)$.
But in this case, the space used by each node of $T$ is 
$O(n^{3/2}\log^{1/2}\Psi)$  instead of $O(n^{3/2}\log^{3/2}\Psi)$.

\paragraph{Query Algorithm.}
Given two query cells $\sigma_1,\sigma_2 \in H$, we want to check if $\sigma_2$ is reachable from $\sigma_1$ in $H$. 
To do this, we observe the following.
Let $v$ and $u$ be the two nodes of the separation tree $T$ such that the separators of $G_v$ and $G_u$       contain $c(\sigma_1)$ and $c(\sigma_2)$, respectively. 
Let $L$ be the path of $T$ from the lowest common ancestor of $v$ and $u$ to the root. 


Note that $D(\tau)$ intersects $D(\tau')$ for any two cells $\tau$ and $\tau'$ of $V'$ with $\direct{\tau}{\tau'}\in E'$.  
Consider a node $w\in T$ and the separator $S_\textnormal{cross}$ of $G_w$. Let $\sinn$ and $\sout$ be the two sets separated by $S_\textnormal{cross}$. 
Also, for any two cells $\sigma$ and $\sigma'$ of $V'$ with $c(\sigma)\in \sinn$ and $c(\sigma')\in \sout$, 
a $D(\sigma)$-$D(\sigma')$ path in the (undirected) disk intersection graph of $X$ intersects the associated disk of  a point of $S_\textnormal{cross}$.
Therefore, a $\sigma$-$\sigma'$ path in $H$ intersects a cell $\sigma''$ where $c(\sigma'')\in S_\textnormal{cross}$.

This implies that for any path $\pi$ from $\sigma_1$ to $\sigma_2$ in $H$, 
there is a node $x$ in $L$ such that 
$\pi$ intersects a cell $\sigma$ such that $c(\sigma)\in S$, where $S\subseteq X$ denotes the separator of $G_x$. 
Among them, consider the node closest to the root node. 
Then $G_w$ contains $\pi$. 
Therefore, it suffices to check if $\sigma_2$ is reachable from $\sigma_1$ in $G_w$ for all nodes $w\in L$. 

%
%

To use this observation, we first compute $v, u$ and $L$ in $O(\log n)$ time since $T$ has $O(\log n)$ levels. 
Then for each node $u$ of $L$, we check if the separator of $G_u$ contains 
$c(\sigma)$ such that $\sigma_1$ is reachable to $\sigma$ and $\sigma_2$ is reachable from $\sigma$ in $O(k)$ time,
where $k$ denotes the size of the separator of $G_u$.
We return $\mathsf{YES}$ if there is such a node $u$.
Otherwise, we return $\mathsf{NO}$. 
Since the size of the separators stored in each node is geometrically increasing along $L$, 
the total size is dominated by the size of the separator of $X$, which is $O(n^{1/2}\log^{1/2}\Psi)$. 
Therefore, our query algorithm takes $O(n^{1/2}\log^{1/2}\Psi)$ time.





\begin{theorem} \label{Psi:conclude}
Given a set $P$ of points associated with radii and $P$ has radius ratio $\Psi$, we can compute a reachability oracle 
for the transmission graph of $P$ in $O(n^{3/2}\log^{3/2}\Psi)$ time. The reachability oracle has size $O(n^{3/2}\log^{1/2}\Psi)$ and supports the query time $O(n^{1/2}\log^{1/2}\Psi)$.
\end{theorem}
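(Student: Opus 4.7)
The plan is to reduce reachability in the transmission graph $G$ to reachability in the auxiliary graph $H=(V',E')$ constructed in Lemma~\ref{lem:H}, and then to build a separation tree for $H$ along the lines of Section~\ref{sec:remain}, but exploiting the fact that the disk set induced by the grid-cell centers is only $O(\log\Psi)$-thick. Since each $P_\sigma$ forms a clique in $G$, two points $p\in P_\sigma$ and $q\in P_{\sigma'}$ are connected by a directed path in $G$ if and only if $\sigma$ reaches $\sigma'$ in $H$. Thus it suffices to answer cell-to-cell reachability queries in $H$.

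First I would construct $H$ from $G$ in $O(n\log n\log\Psi)$ time using Lemma~\ref{lem:H}. Next, for each cell $\sigma\in V'\cap\mathcal Q_i$, I associate its center $c(\sigma)$ with radius $3\cdot 2^i$; a direct counting argument as done just before the theorem statement shows that the resulting disk set on $X=\{c(\sigma):\sigma\in V'\}$ is $O(\log\Psi)$-thick. Then by Lemma~\ref{4:separator} a set $X$ of $m$ such centers admits a separating circle intersecting $O(\sqrt{m\log\Psi})$ associated disks and splitting the rest into two subsets of size at most $2m/3$, computable in linear time. Recursing gives a separation tree $T$ over $X$, and at each node $v$ with associated subgraph $H_v$ I store, for every $c(\sigma)$ in its separator $S_\mathrm{cross}$, the set of cells reachable to/from $\sigma$ inside $H_v$, computed by a pair of BFS traversals.

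The central correctness point is the standard separation-tree argument adapted to cell-disks: if $\pi$ is a $\sigma_1$-$\sigma_2$ path in $H$, then for consecutive cells $\tau,\tau'$ on $\pi$ the disks $D(\tau)$ and $D(\tau')$ intersect, so $\pi$ traces a curve in the union of the disks $D(\tau)$; hence any separating circle that splits $c(\sigma_1)$ from $c(\sigma_2)$ must intersect $D(\tau)$ for some cell $\tau$ on $\pi$, meaning $c(\tau)\in S_\mathrm{cross}$. Taking the highest node $w$ on the path $L$ from the LCA of the nodes containing $c(\sigma_1)$ and $c(\sigma_2)$ up to the root whose separator meets $\pi$ shows $H_w\supseteq \pi$, so it suffices at query time to scan the stored separators along $L$ and check whether any separator cell $\sigma$ is simultaneously reachable from $\sigma_1$ and reaching $\sigma_2$ inside $H_w$. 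Since separator sizes increase geometrically up $L$, the total scanned size is dominated by $O(\sqrt{n\log\Psi})$, giving the claimed query bound.

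For the resource analysis, by Lemma~\ref{lem:H} the graph $H_v$ on $m$ vertices has $O(m\log\Psi)$ edges, so the two BFSs at node $v$ take $O(|S_\mathrm{cross}|\cdot m\log\Psi)=O(m^{3/2}\log^{3/2}\Psi)$ time and store $O(m^{3/2}\log^{1/2}\Psi)$ reachability bits. Setting $P(m)\le P(m_1)+P(m_2)+O(m^{3/2}\log^{3/2}\Psi)$ with $m_1,m_2\le 2m/3$ solves to $P(n)=O(n^{3/2}\log^{3/2}\Psi)$, and the analogous recurrence for space gives $O(n^{3/2}\log^{1/2}\Psi)$; the $O(n\log n\log\Psi)$ time to construct $H$ is absorbed. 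I expect the main obstacle to be the correctness argument showing a connecting path must hit some separator on $L$ (since $H$ is directed whereas Lemma~\ref{4:separator} is phrased for disk sets), which is handled by the observation above that consecutive edges in $H$ correspond to intersecting associated disks, reducing the argument to the undirected disk-intersection setting.
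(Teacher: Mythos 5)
Your proposal is correct and follows essentially the same route as the paper: reduce to the cell graph $H$ of Lemma~\ref{lem:H} using the clique property of each $P_\sigma$, exploit the $O(\log\Psi)$-thickness of the cell-center disks to get small separators via Lemma~\ref{4:separator}, build the separation tree with BFS-computed reachability lists at separator cells, and justify correctness by the observation that consecutive cells on a path in $H$ have intersecting associated disks. The resource recurrences and the resulting bounds match the paper's analysis.
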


\section{Continuous Reachability Oracle}\label{sec:geom}
%
In this section, we present a continuous reachability oracle which its complexity is independent of the radius ratio $\Psi$. In particular, our data structure has size $O(n^{5/3})$ so that for any two point $s\in P$ and $t\in\mathbb{R}^2$, we can check if $s$ is reachable to $t$ in $O(n^{2/3}\log^2 n)$ time. Also, this data structure can be constructed in $O(n^{5/3})$ time. 
If $t$ is reachable from $s$, there is a point $p\in P$ reachable from $s$ with $t\in D_p$. 
In this case, we define a \emph{$s$-$t$ path} in $G$ as the concatenation of a $s$-$p$ path in $G$ and the segment connecting $p$ and $t$.

\medskip 
Consider two query points $s\in P$ and $t\in\mathbb{R}^2$.
If there is a $s$-$t$ path $\pi$, 
we denote the vertex incident to $t$ in $\pi$ by $p(\pi)$. 
We construct auxiliary data structures for $R$ and $\mathcal{C}$ to handle the following two cases. 
We first consider the case that there is a $s$-$t$ path $\pi$ with $p(\pi)\in R$. 
In this case, 
we choose a set $R_0$ of $O(1)$ points in $R$ so that 
there is a $s$-$t$ path $\pi'$ with $p(\pi')\in R_0$ if and only if there is a $s$-$t$ path $\pi$ with $p(\pi)\in R$. 
If it is not the case, for any $s$-$t$ path $\pi$, $p(\pi)$ is contained in a chain of $\mathcal C$. 
We can handle this by investigating every chain of $\mathcal{C}$, and finding the first point in the chain whose associated disk contains $t$. 
In addition to this, we construct the discrete reachability oracle for $G$ described in Section~\ref{sec:oracle}.

\subsection{The remaining set $R$, revisited} \label{sec:r}


We construct the data structure so that we can check if there is a $s$-$t$ path $\pi$ with $p(\pi)\in R$. 
To do this, wee construct the $O(n)$-sized data structure proposed by Afshani and Chan~\cite{Chan:SIAM2009} such that for $P$ and a query point $t$, we can find all points in $P$ whose associated disks contain $t\in\mathbb{R}^2$ in $O(\log n +k)$ time, where $k$ is the number of disks that contain $t$. Moreover, it can be constructed in $O(n\log n)$ time.
Since $R$ is $O(n^{1/3})$-thick, this query time is bounded by $O(n^{1/3})$.  



Given two points $p\in P$ and $t\in\mathbb{R}^2$, we compute a set $R_t$ of $O(n^{1/3})$ points of $P$ whose associated disks contain $t$ within $O(n^{1/3})$ time. 
Then we choose a subset $R_0$ of $R_t$ of size $O(1)$ such that 
there is a $s$-$t$ path $\pi'$ with $p(\pi')\in R_0$ if and only if there is a $s$-$t$ path $\pi$ with $p(\pi)\in R$. 

\begin{lemma} \label{G:R}
Assume that we are given a point $t\in\mathbb{R}^2$ and a set $R_t$ of points of $R$ whose associated disks contain $t$. 
We can compute a $O(1)$-sized set $R_0\subset R_t$ such that 
$R_0 \cap D_p \neq \emptyset$ for every point $p\in R_t$ in $O(|R_t|)$ time. 
\end{lemma}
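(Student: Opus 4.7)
The plan is to use the standard six-cone argument centered at the query point $t$. I would fix six interior-disjoint cones around $t$, each of opening angle $\pi/3$, covering the whole plane, and within each cone $F$ pick the point $q_F$ of $R_t \cap F$ that is closest to $t$. The set $R_0$ is the (at most) six points obtained this way; building it needs only a single linear scan of $R_t$ that assigns each point to its cone and updates the per-cone minimum distance to $t$, which is well within the $O(|R_t|)$ budget.

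The substantive step is to show that, for every $p \in R_t$, the representative $q = q_F$ of the cone $F$ containing $p$ lies in $D_p$. Because $p,q \in F$, the angle $\theta = \angle qtp$ is at most $\pi/3$, so $\cos\theta \geq 1/2$, and by the choice of $q$ we also have $|tq|\leq |tp|$. A direct application of the law of cosines then gives
\[
|pq|^2 \;=\; |tp|^2 + |tq|^2 - 2|tp|\,|tq|\cos\theta \;\leq\; |tp|^2 + |tq|\bigl(|tq|-|tp|\bigr) \;\leq\; |tp|^2.
\]
Since $p \in R_t$ implies $t \in D_p$, we have $|tp|\leq r_p$; combining these, $|pq|\leq r_p$, i.e., $q \in D_p$ as required. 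Because each cone contributes at most one point, $|R_0|\leq 6 = O(1)$.

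The potentially delicate point is the direction of the inequality: one must take the \emph{closest}, rather than the farthest, point in each cone for the law-of-cosines estimate to collapse to $|pq|\leq |tp|$ (it is the term $|tq|(|tq|-|tp|)$ that must be nonpositive). Once that choice is made, the geometric estimate is routine and the running time is clearly $O(|R_t|)$, so I do not foresee any real obstacle beyond getting this sign right.
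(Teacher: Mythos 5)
Your proof is correct and follows essentially the same six-cone approach as the paper; the only difference is that you pick the Euclidean-closest point of $R_t$ in each cone and verify containment via the law of cosines, whereas the paper picks the point minimizing the projection distance $d_F(t,\cdot)$ onto the cone's bisector and argues via an equilateral-triangle farthest-point argument. Both choices of representative work, both verifications are sound, and your running-time analysis matches the paper's.
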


\begin{proof}
	We consider six interior-disjoint cones with opening angle $\pi/3$ with apex $t$.
	For a cone $F$, we consider the list $L$ of the points $p\in R_t\cap F$.
	We pick the point $q\in L$ that minimizes the distance value $d_F(t,\cdot)$. (For the definition of $d_F(t,\cdot)$, see the beginning of Section~\ref{sec:theta}.) 
	We claim that for every point $x\in L$, the associated disk of $x$ contains $q$.
	Let $\triangle$ be the regular triangle surrounded by the two rays of $F$ and a line passing through $x$. 
	Then, $t$ is one of the farthest points from $x$ within $\triangle$. Also, $q$ is contained in $\triangle$ by the definition of $q$.
	Therefore, $|qx|\leq |tx|\leq r_x$, and thus $D_x$ contains $q$.
	Then, the set of points $q$ from all cone $F$ satisfy the condition of $R_0$. 
	This procedure takes $O(n^{1/3})$ time.
\end{proof}

Then we can answer the continuous reachability query using the discrete reachability oracle for all points $q\in R_0$ 
in $O(n^{2/3})$ time by Theorem~\ref{4:summary}.

\subsection{The set $\mathcal C$ of chains, revisited.}\label{sec:c}
We construct a data structure for each chain $C\in \mathcal C$ so that we can compute the first point in $C$ which contains $t$.
To do this, we construct a balanced binary search tree of the indices in $[1,t]$ for $C=\langle p_1,\ldots,p_t \rangle$.
For each node $u$ of the binary search tree, we construct the power diagram of the points stored in the subtree rooted at $u$.
Note that this data structure is a variation of the third-level tree proposed in Section \ref{sec:range}.
We sort the points along their indices here, while we sort the points along $\ell$-projections in Section \ref{sec:range}.
Therefore, as we showed in Section \ref{sec:range}, the construction takes $O(m\log m)$ time, and we can compute the first point in $C$  which contains $t$ within $O(\log^2 m)$ time for each chain $C$, where $m=|C|$.
In this way, we can construct the auxiliary data structures for all chains of $\mathcal C$ in $O(n\log n)$ time.
Given two points $s\in P$ and $t\in\mathbb{R}^2$, we can check if $s$ is reachable to $t$ as follows.
Suppose there is a $s$-$t$ path $\pi$. 
If $p(\pi)$ is contained in a chain $C\in \mathcal C$, let $k$ be the index of $p(\pi)$ in $C=\langle p_1,\ldots,p_t\rangle$,
that is, $p_k=p(\pi)$. 
We let $j(t)$ denote the index of the first point in $C$ which contains $t$, and let $i(s)$ denote the index of the last point in $C$
which is reachable from $s$. Recall that $i(s)$ is stored in the discrete reachability oracle, and $j(t)$ can be computed
using the auxiliary data structure for $C$ as mentioned above. 
Then there is a $s$-$t$ path $\pi$ with $p(\pi)\in C$ if and only if  $j(t)\leq k\leq i(s)$. 
We do this for all chains in $\mathcal{C}$.
Since we can compute the first point that contains $t$ for every chain of $\mathcal C$ in $O(n^{2/3}\log^2 n)$ time,
the total query time is $O(n^{2/3}\log^2 n)$ time. 
Therefore, we have the following theorem.

 \begin{theorem} \label{5:conclude}
Given a set $P$ of points associated with radii, we can compute a 
continuous reachability oracle for the transmission graph of $P$ in $O(n^{5/3})$ time. 
The reachability oracle has size $O(n^{5/3})$ and supports the query time $O(n^{2/3}\log^2 n)$.
\end{theorem}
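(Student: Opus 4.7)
The plan is to combine the discrete reachability oracle from Theorem~\ref{4:summary} with two geometric auxiliary data structures, one for the remaining set $R$ and one for each chain in $\mathcal{C}$, so that any continuous query $(s,t)$ with $s\in P$ and $t\in\mathbb{R}^2$ reduces to a constant number of discrete reachability queries plus chain probes. The key observation is that if there is an $s$-$t$ path $\pi$, then its last vertex $p(\pi)$ lies either in $R$ or in some chain $C\in\mathcal{C}$; these two cases will be handled separately.

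For the remaining-set case, I would first preprocess $P$ with the Afshani--Chan data structure of size $O(n)$ and preprocessing time $O(n\log n)$ that reports the set of disks containing a query point in $O(\log n + k)$ time. Since $R$ is $O(n^{1/3})$-thick by Lemma~\ref{4:chain,thick}, the reported set $R_t$ has size $O(n^{1/3})$. By Lemma~\ref{G:R}, I can extract from $R_t$ in $O(n^{1/3})$ time a subset $R_0$ of constant size such that every point $p\in R_t$ has its associated disk hitting $R_0$; therefore $s$ reaches $t$ through a vertex in $R$ if and only if $s$ reaches some vertex in $R_0$ in $G$. Querying the discrete reachability oracle of Theorem~\ref{4:summary} on each of the $O(1)$ pairs $(s,q)$ for $q\in R_0$ takes $O(n^{2/3})$ total time.

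For the chain case, for each chain $C=\langle p_1,\ldots,p_m\rangle\in\mathcal{C}$ I would build, as described in Section~\ref{sec:c}, a balanced binary search tree on the indices $1,\ldots,m$ where each internal node stores the power diagram of the points in its subtree. This tree lets me binary-search for the smallest index $j(t)$ such that $t\in D_{p_{j(t)}}$ in $O(\log^2 m)$ time, by descending to the leftmost leaf whose subtree contains a disk covering $t$. Together with the index $i(s)$ (already precomputed and stored in the oracle of Section~\ref{sec:oracle}), the existence of an $s$-$t$ path with $p(\pi)\in C$ is equivalent to $j(t)\le i(s)$, by the same argument as Lemma~\ref{lem:chain-index} extended to continuous targets. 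Summing $O(m\log m)$ over all chains, the construction cost is $O(n\log n)$, which is absorbed by the $O(n^{5/3})$ cost of the discrete oracle.

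Putting things together, the total preprocessing time is $O(n^{5/3})$ (dominated by the discrete oracle), and the total space is $O(n^{5/3})$. For a query $(s,t)$, the $R$-case costs $O(n^{2/3})$ and the chain case costs $O(\log^2 n)$ per chain across $O(n^{2/3})$ chains, giving $O(n^{2/3}\log^2 n)$ in total, which dominates. The correctness follows by case analysis on whether $p(\pi)\in R$ or $p(\pi)$ lies in some chain, using Lemma~\ref{G:R} in the first case and the $j(t)\le i(s)$ criterion in the second. The only nontrivial step in the analysis is justifying the $O(1)$-size reduction of $R_t$ to $R_0$, which I expect to be the main conceptual hurdle; this is already isolated as Lemma~\ref{G:R} and proved via a six-cone argument, so the remaining work is merely to assemble the components and verify the stated bounds.
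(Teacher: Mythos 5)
Your proposal is correct and follows essentially the same route as the paper: the Afshani--Chan reporting structure plus Lemma~\ref{G:R} for the case $p(\pi)\in R$, a per-chain balanced search tree with power diagrams to compute $j(t)$ and test $j(t)\le i(s)$ for the chain case, and the discrete oracle of Theorem~\ref{4:summary} underneath. The stated bounds and the case analysis match the paper's proof.
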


\end{document}